\newtheorem{definition}{Definition}
\newtheorem{proposition}{Proposition}
\begin{document}
\title{Controlling the Outbreak of COVID-19: A Noncooperative Game Perspective
}

\author{Anupam Kumar Bairagi, \IEEEmembership{Member, IEEE,}
	Mehedi Masud,~\IEEEmembership{Senior Member,~IEEE,}
	Do Hyeon Kim,~\IEEEmembership{}
	Md. Shirajum Munir,~\IEEEmembership{Student Member,~IEEE,}
	Abdullah Al Nahid,~\IEEEmembership{}
	Sarder Fakhrul Abedin,~\IEEEmembership{Student Member,~IEEE,}
	Kazi Masudul Alam,~\IEEEmembership{}
	Sujit Biswas,~\IEEEmembership{Member,~IEEE,}
	Sultan S~Alshamrani,~\IEEEmembership{}
	Zhu Han,~\IEEEmembership{Fellow,~IEEE,}
	and~Choong~Seon~Hong,~\IEEEmembership{Senior~Member,~IEEE}
	\thanks{Anupam Kumar Bairagi, Do Hyeon Kim, Md. Shirajum Munir, Sarder Fakhrul Abedin, and Choong~Seon~Hong  are with the Department of Computer Science and Engineering, Kyung Hee University, Yongin-si 17104, Republic of Korea.}
	\thanks{Mehedi Masud is with the Department of Computer Science, Taif University, Taif, KSA.}
	\thanks{Mehedi Masud and Sultan S~Alshamrani is with the Department of Computer Science, Taif University, Taif, KSA.}
	\thanks{Abdullah Al Nahid is with the Electronics and Communication Engineering Discipline, Khulna University, Bangladesh.}
	\thanks{Kazi Masudul Alam is with the Computer Science and Engineering Discipline, Khulna University, Bangladesh.}
	\thanks{Sujit Biswasis is with the Department of Computer Science and Engineering, Faridpur Engineering College, Bangladesh.}
	\thanks{Sultan S~Alshamrani is with theDepartment of Information Technology, Taif University, Taif, KSA.}
	\thanks{Zhu Han is with the Electrical and Computer Engineering Department, University of Houston, Houston, TX 77004, USA, and also with the Department of Computer Science and Engineering, Kyung Hee University, Yongin-si 17104, Republic of Korea.}
	\thanks{Corresponding author: Choong Seon Hong (e-mail: cshong@khu.ac.kr)}
	\thanks{This research was supported by the MSIT(Ministry of Science and ICT), Korea, under the Grand Information Technology Research Center support program(IITP-2020-2015-0-00742) supervised by the IITP(Institute for Information \& communications Technology Planning \& Evaluation). Dr. CS Hong is the corresponding author.}
	\thanks{© 2020 IEEE. Personal use of this material is permitted.  Permission from IEEE must be obtained for all other uses, in any current or future media, including reprinting/republishing this material for advertising or promotional purposes, creating new collective works, for resale or redistribution to servers or lists, or reuse of any copyrighted component of this work in other works.}}


\markboth{Accepted article by IEEE Access. DOI: 10.1109/ACCESS.2020.3040821}%
{Shell \MakeLowercase{\textit{et al.}}: Bare Demo of IEEEtran.cls for IEEE Communications Society Journals}
%



\maketitle

\begin{abstract}
\emph{COVID-19} is a global epidemic. Till now, there is no remedy for this epidemic. However, isolation and social distancing are seemed to be effective preventive measures to control this pandemic. Therefore, in this paper, an optimization problem is formulated that accommodates both isolation and social distancing features of the individuals. To promote social distancing, we solve the formulated problem by applying a noncooperative game that can provide an incentive for maintaining social distancing to prevent the spread of COVID-19. Furthermore, the sustainability of the lockdown policy is interpreted with the help of our proposed game-theoretic incentive model for maintaining social distancing where there exists a Nash equilibrium. Finally, we perform an extensive numerical analysis that shows the effectiveness of the proposed approach in terms of achieving the desired social-distancing to prevent the outbreak of the COVID-19 in a noncooperative environment. Numerical results show that the individual incentive increases more than $85\%$ with an increasing percentage of home isolation from $25\%$ to $100\%$ for all considered scenarios. The numerical results also demonstrate that in a particular percentage of home isolation, the individual incentive decreases with an increasing number of individuals.
\end{abstract}

\begin{IEEEkeywords}
COVID-19, health economics, isolation, social distancing, noncooperative game, nash equilibrium.
\end{IEEEkeywords}

\IEEEpeerreviewmaketitle

\section{Introduction}
\label{sec:introduction}
\IEEEPARstart{T}{he} novel Coronavirus (2019-nCoV or COVID-19) is considered to be one of the most dangerous pandemics of this century. COVID-19 has already affected every aspect of individual's life i.e. politics, sovereignty, economy, education, religion, entertainment, sports, tourism, transportation, and manufacturing. It was first identified in Wuhan City,  China on December 29, 2019, and within a short span of time, it spread out worldwide \cite{AWu2019,CHuang2020}. The World Health Organization (WHO) has announced the COVID-19 outbreak as a Public Health Emergency of International Concern (PHEIC) and identified it as an epidemic on January 30, 2020 \cite{WHO2020}. Till July 23, 2020, COVID-19 has affected $215$ countries and territories throughout the globe and 2 international conveyances \cite{Worldmeter2020}.

The recent statistics on COVID-19 also indicate that more than $15,371,829$ persons have been affected in different ways \cite{Worldmeter2020,JHUM2020}. Currently, the ten most infected countries are USA, Brazil, India, Russia, South Africa, Peru, Mexico, Chile, Spain, UK, \textcolor{black}{and these countries contributed more than $68\%$ of worldwide cases}. Since the outbreak, the total number of human death and recovery to/from COVID-19 are $630,138$ and $9,348,761$, respectively \cite{Worldmeter2020,JHUM2020} (till July 23, 2020). The fatality of human life due to COVID-19 is frightening in numerous countries. For instance, among the highest mortality rates countries, $70\%$ of the mortality belongs to the top $8$ countries due to COVID-19. Furthermore, the percentages of affected cases for male and female are around $55.21\%$ and $44.79\%$, whereas these values are about $76\%$ and $24\%$, respectively in death cases globally \cite{Global5050}. Different countries are undertaking different initiatives to reduce the impact of the COVID-19 epidemic, but there is no clear-cut solution to date.

One of the most crucial tasks that countries need to do for understanding and controlling the spread of COVID-19 is testing. Testing allows infected bodies to acknowledge that they are already affected. This can be helpful for taking care of them, and also to decrease the possibility of contaminating others. In addition, testing is also essential for a proper response to the pandemic. It allows carrying evidence-based steps to slow down the spread of COVID-19. However, to date, the testing capability for COVID-19 is quite inadequate in most countries around the world.  South Korea was the second COVID-19 infectious country after China during February 2020. However, mass testing may be one of the reasons why it succeeded to diminish the number of new infections in the first wave of the outbreak since it facilitates a rapid identification of potential outbreaks \cite{Balilla2020}. For detecting COVID-19, two kinds of tests are clinically carried out: (i) detection of virus particles in swabs collected from the mouth or nose, and (ii) estimating the antibody response to the virus in blood serum.       

This COVID-19 epidemic is still uncontrolled in most countries. As a result, day by day, the infected cases and death graph are rising exponentially. However, researchers are also focusing on the learning-based mechanism for detecting COVID-19 infections \cite{Maghdid2020,Rao2020,LLi2020,XXu2020,YWang2020,Gozes2020,CZheng2020}. This approach can be cost-effective and also possibly will take less time to perform the test. \textcolor{black}{Some other studies \cite{GQian2020,PWang2020,NChen2020,WGuan2020,ZHu2020,HChen2020,FHJAI2020,Jiao2020, Din2020} focus on finding the spreading behavior of COVID-19 by using known epidemic models like Susceptible - Infectious - Recovered (SIR), Susceptible - Infectious - Recovered - Susceptible (SIRS), Susceptible - Exposed - Infectious - Recovered (SEIR), Susceptible-Infected-Hibernator-Removed (SIHR), Susceptibl- Infected-Diagnosed-Ailing-Recognized-Threatened-Healed-Extinct (SIDARTHE) etc. However, all of these epidemic models are confined by the hypothesis of constant recovery rates, and they also struggle to reveal the system dynamics when there is a limited coupling between subpopulations \cite{Roberts2015}. Moreover, These models are inadequate to capture typically stochastic aspects, like fade-out, extinction, and lack of synchrony due to arbitrary delays \cite{Rock2014}. Besides, the limitations of these models, they may present the epidemic scenario but have little impact on reducing or controlling the causes.} However, the infected cases of the COVID-19 can be reduced by maintaining a certain social distance among the individuals. In particular, to maintain such social distancing, self-isolation, and community lockdown can be possible approaches. Thus, it is imperative to develop a model so that the social community can take a certain decision for self-isolation/lockdown to prevent the spread of COVID-19.

To the best of our knowledge, there is no study that focuses on the mathematical model for monitoring and controlling individual in a community setting to prevent this COVID-19 epidemic. Thus, the main contribution of this paper is to develop an effective mathematical model with the help of global positioning system (GPS) information to fight against COVID-19 epidemic by monitoring and controlling individual. To this end, we make the following key contributions:
\begin{itemize}  
	\item First, we formulate an optimization problem for maximizing the social utility of individual considering both isolation and social distancing. Here, the optimization parameters are the positions of individual.  
	\item Second, we reformulate the objective function which is incorporated with the social distancing feature of an individual as a noncooperative game. Here, we show that home isolation is the dominant strategy for all the individuals (players) of the game. We also prove that the game has a Nash Equilibrium (NE).
	\item Third, we interpret the sustainability of lockdown policy with the help of our model. 
	\item Finally, we evaluate the effectiveness of the proposed approach with the help of extensive numerical analysis.   
\end{itemize}

The remainder of the paper is organized as follows. In Section \ref{Lit_Rev}, we present the literature review. We explain the system model and present the problem formulation in Section \ref{Sys_model}. The proposed solution approach of the above-mentioned problem is addressed in Section \ref{Sol_Game}. We interpret the sustainability of lockdown policy with our model in Section \ref{Sustainability}. In Section \ref{Per_eva}, we provide numerical analysis for the proposed approach. \textcolor{black}{We present the limitation of the current study in Section \ref{Limitation}.} Finally, we draw some conclusions in Section \ref{Conc}. 

\section{Literature Review}
\label{Lit_Rev}
COVID-19 is the seventh coronavirus identified to contaminate humans. Individuals were first affected by the 2019-nCoV virus from bats and other animals that were sold at the seafood market in Wuhan \cite{RLu2020,KJalava2020}. Afterward, it began to spread from human to human mainly through respiratory droplets produced while individual sneeze cough or exhaling \cite{WHO2020}.  

\textcolor{black}{In \cite{GQian2020}, the authors present a generalized fractional-order SEIR model (SEIQRP) for predicting the potential outbreak of contagious diseases alike COVID-19.  They also present a modified SEIQRP model in their work, namely the SEIQRPD model.  With the real data of COVID-19, they have shown that the proposed model has a more reliable prediction capability for the succeeding two weeks. In \cite{PWang2020}, the authors introduce a Bayesian Heterogeneity Learning approach for Susceptible-Infected-Removal-Susceptible (SIRS) model. They formulate the SIRS model into a hierarchical structure and assign the Mixture of Finite mixtures priors for heterogeneity learning. They utilize the methodology to investigate the state level COVID-19 data in the U.S.A. The authors induce an innovative neurodynamical model of epidemics called Neuro-SIR in \cite{NChen2020}. The proposed approach allows the modeling of pandemic processes in profoundly different populations and contagiousness contexts. In \cite{WGuan2020}, the authors propose a mobility-based SIR model for epidemics considering the pandemic situations like COVID-19. The proposed model considers the population distribution and connectivity of different geographic locations across the globe. The authors propose a noble mathematical model for presenting the COVID-19 pandemic by fractional-order SIDARTHE model in \cite{ZHu2020}. They prove the existence of a steady solution of the fractional-order COVID-19 SIDARTHE model. They also produce the necessary conditions for the fractional order of four proposed control strategies. In \cite{HChen2020}, the authors propose a conceptual mathematical model for the epidemic dynamics using four compartments, namely Susceptible, Infected, Hospitalized, and Recovered. They investigate the stability of the equilibrium for the model using the basic reproduction number for knowing the austerity. In \cite{FHJAI2020}, the authors develop a pandemic model to inquire about the transmission dynamics of the COVID-19.  Here, they assess the theoretical impact of probable control invasions like home quarantine, social distancing, cautious behavior, and other self-imposed measures. They apply the Bayesian approach and authorized data to figure out some of the model parameters.  In \cite{Jiao2020}, the authors introduce a SIHR model to prognosticate the course of the epidemic for finding an effective control scheme. The model parameters are estimated based on fitting to the published data of Hubei province, China. In \cite{Din2020}, the authors present a mathematical model for COVID-19 based on three different compartments, namely susceptible, infected, and recovered classes.  They also present some qualitative viewpoints for the model, i.e., the existence of equilibrium and its stability issues.}


Machine learning can play an important role to detect COVID-19 infected individual based on the observatory data. The work in \cite{Maghdid2020} proposes an algorithm to investigate the readings from the smartphone’s sensors to find the COVID 19 symptoms of a patient. Some commons symptoms of COVID-19 victims like fever, fatigue, headache, nausea, dry cough, lung CT imaging features, and shortness of breath can be captured by using the smartphone. This detection approach for COVID-19 is faster than the clinical diagnosis methods.
The authors in \cite{Rao2020} propose an artificial intelligence (AI) framework for obtaining the travel history of individual using a phone-based survey to classify them as no-risk, minimal-risk, moderate-risk, and high-risk of being affected with COVID-19. The model needs to be trained with the COVID-19 infected information of the areas where s/he visited to accurately predict the risk level of COVID-19.
In \cite{LLi2020}, the authors develop a deep learning-based method (COVNet) to identify COVID -19 from the volumetric chest CT image. For measuring the accuracy of their system, they utilize community-acquired pneumonia (CAP) and other non-pneumonia CT images. The authors in \cite{XXu2020} also use deep learning techniques for distinguishing COVID-19 pneumonia from Influenza-A viral pneumonia and healthy cases based on the pulmonary CT images. They use a location-attention classification model to categorize the images into the above three groups.
Depth cameras and deep learning are applied to recognize unusual respiratory pattern of personnel remotely and accurately in \cite{YWang2020}. They propose a novel and effective respiratory simulation model based on the characteristics of original respiratory signals. This model intends to fill the gap between large training datasets and infrequent real-world data.
Multiple retrospective experiments were demonstrated to examine the performance of the system in the detection of speculated COVID-19 thoracic CT characteristics in \cite{Gozes2020}. A 3D volume review, namely  “Corona score” is employed to assess the evolution of the disease in each victim over time.
In \cite{CZheng2020}, the authors use a pre-trained UNet to fragment the lung region for automatic detection of COVID-19 from a chest CT image. Afterward, they use a 3D deep neural network to estimate the probability of COVID-19 infections over the segmented 3D lung region. Their algorithm uses 499 CT volumes as a training dataset and 131 CT volumes as a test dataset and achieves 0.959 ROC AUC and 0.976 PR AUC.
The study in \cite{Rahman2020} presents evidence of the diversity of human coronavirus, the rapid evolution of COVID-19, and their clinical and Epidemiological characteristics. The authors also develop a deep learning model for identifying COVID-19. and trained the model using a small CT image datasets. They find an accuracy of around $90\%$ using a small CT image dataset. 

In \cite{Hellewell2020},  the authors propose a stochastic transmission model for capturing the phenomenon of the COVID-19 outbreak by applying a new model to quantify the effectiveness of association tracing and isolation of cases at controlling a severe acute respiratory syndrome coronavirus 2 (SARS-CoV-2)-like pathogen. In their model, they analyze synopses with a varying number of initial cases, the basic reproduction number, the delay from symptom onset to isolation, the probability that contacts were traced, the proportion of transmission that occurred before symptom start, and the proportion of subclinical infections. They find that contact tracing and case isolation are capable enough to restrain a new outbreak of COVID-19 within 3 months. In \cite{Munir2020}, the authors present a risk-sensitive social distance recommendation system to ensure private safety from COVID-19. They formulate a social distance recommendation problem by characterizing Conditional Value-at-Risk (CVaR) for a personal area network (PAN) via Bluetooth beacon. They solve the formulated problem by proposing a two phases algorithm based on a linear normal model. In \cite{Chamola2020}, the authors mainly dissect the various technological interventions made in the direction of COVID-19 impact management. Primarily, they focus on the use of emerging technologies such as Internet of Things (IoT), drones, artificial intelligence (AI), blockchain, and 5G in mitigating the impact of the COVID-19 pandemic.

\textcolor{black}{Moreover, noncooperative game theory is used by different authors for solving resource allocation problems in communication \cite{Zhou2015, Zhou2016}. In \cite{Zhou2015}, the authors represent the resource allocation problem as a noncooperative game, where every player desires to maximize its energy efficiency (EE). In \cite{Zhou2016}, the authors modeled the distributed resource allocation problem as a noncooperative game in which every player optimizes its EE individually with the support of distributed remote radio heads.} 

The works \cite{GQian2020,PWang2020,NChen2020,WGuan2020,ZHu2020,HChen2020,Maghdid2020,Rao2020,LLi2020,XXu2020,YWang2020,Gozes2020,CZheng2020, Rahman2020, Hellewell2020, Munir2020, Chamola2020} focused on COVID-19 detection and analyzed the characteristic of its respiratory pattern. Hence, the literature has achieved a significant result in terms of post responses. In fact, it is also imperative to control the epidemic of COVID-19 by maintaining social distance. Therefore, different from the existing literature, we focus on the design of a model that can measure individual's isolation and social distance to prevent the epidemic of COVID-19. The model considers both isolation and social distancing features of individuals to control the outbreak of COVID-19.

\begin{figure}
	\centering
	\includegraphics[width=0.5\textwidth]{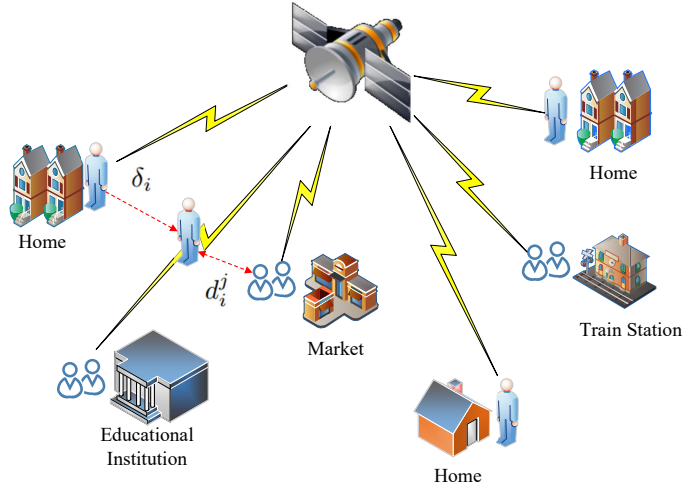}
	\caption{\textcolor{black}{Exemplary System model. Isolation indicates staying at home whereas social distancing measures the distance of a individual from others.} }\label{Fig_System}
\end{figure}

\section{System Model and Problem Formulation}
\label{Sys_model}
Consider an area in which a set $\mathcal{N}$ of $N$ individuals are living under COVID-19 threat and must decide whether to stay at home or go leave their homes to visit a market, shop, train station, or other locations, as shown in Figure \ref{Fig_System}. Everyone has a mobile phone with GPS. From analyzing the GPS information, we can know their home locations of each individuals, and longitude and latitude of these locations are denoted by $\boldsymbol{X}^h$, and $\boldsymbol{Y}^h$, respectively. We consider one time period (e.g., 15 or 30 minutes) for our scenario and this time period is divided into $T$ smaller time steps in a set $\mathcal{T}$. For each of time step $t\in \mathcal{T}$, we have the GPS coordinates $\boldsymbol{X}$ and $\boldsymbol{Y}$ of every individual.

Now, the deviation from home for any individual $i\in \mathcal{N}$ in between two time steps can be measured by using Euclidean distance as follows:
\begin{equation}
\delta_i^t=
\begin{cases}
\sqrt{(X_i^h-X_i^t)^2+(Y_i^h-Y_i^t)^2}, & \text{if}\  t=1, \\
\sqrt{(X_i^{t-1}-X_i^t)^2+(Y_i^{t-1}-Y_i^t)^2}, & \text{otherwise}.
\end{cases}
\end{equation}   
Thus, the total deviation from home by each individual $i\in \mathcal{N}$ in a particular time period can be calculated as follows:
\begin{equation}\label{Equ_delta_i}
\delta_i= \sum_{t\in \mathcal{T}}\delta_i^t, \forall i\in \mathcal{N}
\end{equation} 
On the other hand, at the end of a particular time period, the distance between an individual $i\in \mathcal{N}$ and any other individuals $j\in \mathcal{N}, j\ne i$ is as follows:
\begin{equation}
d_i^j=\sqrt{(X_i^{T}-X_j^T)^2+(Y_i^{T}-Y_j^T)^2}.
\end{equation} 
Hence, the total distance of individual $i\in \mathcal{N}$ from other individuals $\mathcal{N}_i\subseteq \mathcal{N}$, who are in close proximity with $i\in \mathcal{N}$, \textcolor{black}{and $\mathcal{N}_i$ is fixed for a particular time step,} can be expressed as follows:
\begin{equation} \label{Equ_d_i}
d_i= \sum_{j\in \mathcal{N}_i}d_i^j,\forall i\in \mathcal{N}.
\end{equation}   

Our objective is to keep $\delta$ minimum for reducing the spread of COVID-19 from infected individuals, which is an isolation strategy. Meanwhile, we want to maximize social distancing which mathematically translates into maximizing $d$ for reducing the chance of infection from others. However, we can use $\log$ term to bring fairness \cite{Bairagi2018, Bairagi2019} in the objective function among all individuals. Hence, we can pose the following optimization problem:
\begin{subequations}\label{Opt1}
	\begin{align}
	\textcolor{black}{\underset{\boldsymbol{X},\boldsymbol{Y}} \max} \
	& \textcolor{black}{\underset{i\in \mathcal{N}}\min} \
	\textcolor{black}{\log(Z-\delta_i)^\omega d_i^{(1-\omega)} \tag{\ref{Opt1}}} \\
	\text{s.t.} \quad &\label{Opt1:const1} \delta_i\le \delta_{\textrm{max}},\\ 
	&\label{Opt1:const2} d_i^j\ge d_{\textrm{min}}, \forall i,j\\
	&\label{Opt1:const3} \omega\in [0,1].
	\end{align}
\end{subequations}
In (\ref{Opt1}), $Z$ is a large number for changing the minimization problem to maximization one, and $Z>\delta_i, \forall i\in \mathcal{N}$. The optimization variables $\boldsymbol{X}$ and $\boldsymbol{Y}$ indicate longitude, and latitude, respectively, of the individuals. Moreover, the first term in (\ref{Opt1}) encourages individual for \emph{isolation} whereas the second term in (\ref{Opt1}) encourages individual to maintain fair \emph{social distancing}. In this way, solving (\ref{Opt1}) can play a vital role in our understanding on how to control the spread of COVID-19 among vast population in the society. Constraint (\ref{Opt1:const1}) guarantees small deviation to maintain emergency needs, while Constraint (\ref{Opt1:const2}) assures a minimum fair distance among all the individuals to reduce the spreading of COVID-19 from one individual to another. \textcolor{black}{Usually, local government or authority can set the value of $\delta_{\textrm{max}}$, and  $d_{min}$ can be set up by expertise body like the World Health Organization (WHO).}  Constraint (\ref{Opt1:const3}) shows that $\omega$ can take any value between 0 and 1 which captures the importance between two key factors captured in the objective function of (\ref{Opt1}). For example, if COVID-19 is already spreading in a given society, then most of the weight would go to isolation term rather than social distancing. \textcolor{black}{Here, we present a utility-based model depending on the preventing mechanism like home isolation and social distancing. This is an indirect approach to combat an epidemic like COVID-19.  We have no scope to combine traditional probabilities as described in epidemic models (i.e., SIR, SIRS, SEIR, SIHR, SIDARTHE, etc.) like infection and recovery in our utility-based model.} The objective of (\ref{Opt1}) is difficult to achieve as it requires the involvement and coordination among all the $N$ individual. Moreover, if the individuals are not convinced then it is also difficult for the government to attain the objective forcefully. Thus, we need an alternative solution approach that encourage individual separately to achieve the objective and game theory, which is successfully used in \cite{Bairagi_2018,Abedin2019}, can be one potential solution, which will be elaborated in the next section.

\section{A Noncooperative Game Solution}
\label{Sol_Game}
To attain the objective for a vast population, governments can introduce incentives for isolation and also for social distancing. Then every individual wants to maximize their utilities or payoffs. In this way, government can play its role for achieving social objective. Hence, the modified objective function is given as follows:
\begin{equation}\label{Equ_U}
U(\boldsymbol{\delta},\boldsymbol{d})=\alpha \sum_{i\in \mathcal{N}}\log(Z-\delta_i) +  \beta\sum_{i\in \mathcal{N}}\log d_i,
\end{equation}
where $\alpha=\alpha^{'}\omega$ and $\beta=\beta^{'}(1-\omega)$ with $\alpha^{'}>0$ and $\beta^{'}>0$ are incentives per unit of isolation and social distancing. In practice, $\alpha$ and $\beta$ can be monetary values for per unit of isolation and social distancing, respectively. In (\ref{Equ_U}), one individual's position affects the social distancing of others, and hence, the individuals have partially conflicting interest on the outcome of $U$. Therefore, the situation can be interpreted with the noncooperative game \cite{Wang2014, Song2012}.   

A noncooperative game is a game that exhibit a competitive situation where each \textcolor{black}{player (i.e., individual)} needs to make choices independent of the other individuals, given the possible policies of the other individuals and their impact on the individual's payoffs or utilities. Now, a noncooperative game in strategic form or a strategic game $\mathcal{G}$ is a triplet $\mathcal{G}=(\mathcal{N}, (\mathcal{S}_i)_{i\in \mathcal{N}}, (u_i)_{i\in \mathcal{N}})$ \cite{Han2012} for any time period  where:
\begin{itemize}
	\item $\mathcal{N}$ is a finite set of individuals, i.e., $\mathcal{N}=\{1,2,\cdots,N\}$,
	\item $\mathcal{S}_i$ is the set of available strategies for individual $i\in \mathcal{N}$,
	\item $u_i:\mathcal{S}\rightarrow \mathbb{R}$ is the payoff function of individual $ i\in \mathcal{N}$, with $\mathcal{S}=\mathcal{S}_1\times \mathcal{S}_2\times..\times \mathcal{S}_N$. 
\end{itemize}
\textcolor{black}{Theoretically, there may be many different strategies, but in our case, we have considered two strategies, namely staying at home and moving outside (visit market, shop, train station, school/college etc.) for every individual.} We use  $\mathcal{S}_i=\{s_i^h,s_i^m\}$ to represent the set of strategies for each individual $i$ where $s_i^h$ and $s_i^m$ indicate the strategies of staying at home and moving outside for individual $i\in \mathcal{N}$, respectively. The payoff or incentive function of any individual $i\in \mathcal{N}$ in a time period can be defined as follows:
\begin{equation}
u_i(.) =
\begin{cases}
\alpha \log Z +\beta \log \tilde{d}_i, & \text{if strategy is}\  s_i^h, \\
\alpha \log (Z-\delta_i)+\beta \log d_i, & \text{if strategy is}\  s_i^m.
\end{cases}
\end{equation}
where $\tilde{d}_i=\sum_{j\in \mathcal{N}_i}\sqrt{(X_i^h-X_j)^2+(Y_i^h-Y_j)^2}$.

The Nash equilibrium \cite{Nash1950} is the most used solution concept for a noncooperative game. Formally, Nash equilibrium can be defined as follows \cite{Basar1999}:
\begin{definition}:
	\label{Def_NE}
	A pure strategy Nash equilibrium for a non-cooperative game $\mathcal{G}=(\mathcal{N}, (\mathcal{S}_i)_{i\in \mathcal{N}}, (u_i)_{i\in \mathcal{N}})$ is a strategy profile $\mathbf{s}^*\in \mathcal{S}$ where
	$u_i(s_i^*,{\boldsymbol{s}}_{-i}^*)\ge u_i(s_i,{\boldsymbol{s}}_{-i}^*), \forall s_i \in \mathcal{S}_i,\forall i\in \mathcal{N}$.
\end{definition}

However, to find the Nash equilibrium, the following two definitions can be helpful.
\begin{definition}\cite{Han2012}:
	\label{Def_dominant_strategy}
	A strategy $s_i\in \mathcal{S}_i$ is the dominant strategy for individual $i\in \mathcal{N}$ if $u_i(s_i,{s}_{-i})\ge u_i(s_i^{'},{s}_{-i}), \forall s_i^{'} \in \mathcal{S}$ and $\forall s_{-i} \in \mathcal{S}_{-i}$, where $\mathcal{S}_{-i}=\prod_{j\in \mathcal{N}, j\ne i}\mathcal{S}_j$ is the set of all strategy profiles for all individuals except $i$.
	
\end{definition}

\begin{definition}\cite{Han2012}:
	\label{Def_dominant_strategy_equilibrium}
	A strategy profile $\mathbf{s}^*\in \mathcal{S}$ is the dominant strategy equilibrium if every elements $s_i^*$ of $\mathbf{s}^*$ is the dominant strategy of individual $i\in \mathcal{N}$.
\end{definition}

Thus, if we can show that every individual of our game $\mathcal{G}$ has a strategy that gives better utility irrespective of other individuals strategies, then with the help of Definition \ref{Def_dominant_strategy} and \ref{Def_dominant_strategy_equilibrium}, we can say that Proposition \ref{Theo_dominant_strategy} is true.

\begin{proposition}
	\label{Theo_dominant_strategy}
	$\mathcal{G}$ has a pure strategy Nash equilibrium when $\alpha>\beta$.  
\end{proposition}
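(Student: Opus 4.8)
The plan is to prove the stronger statement that $s_i^h$ (staying at home) is a \emph{dominant strategy} for every player $i\in\mathcal{N}$, from which the conclusion follows immediately via Definitions \ref{Def_dominant_strategy} and \ref{Def_dominant_strategy_equilibrium}: if each $s_i^h$ is dominant, then the all-home profile $\mathbf{s}^*=(s_1^h,\dots,s_N^h)$ is a dominant strategy equilibrium, and any dominant strategy equilibrium trivially satisfies the best-response inequality of Definition \ref{Def_NE} and is therefore a pure-strategy Nash equilibrium. This is exactly the reduction the text announces just before the statement, so the entire burden of the proof sits in verifying the dominance inequality of Definition \ref{Def_dominant_strategy}.

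To do this I would fix an arbitrary player $i$ and an arbitrary profile $\boldsymbol{s}_{-i}$ of the remaining players and compare $u_i(s_i^h,\boldsymbol{s}_{-i})$ with $u_i(s_i^m,\boldsymbol{s}_{-i})$ directly from the piecewise payoff definition. The crucial structural fact is that playing $s_i^h$ sets the home deviation to zero, $\delta_i=0$, so the isolation term attains its maximal value $\alpha\log Z$, which strictly exceeds the $\alpha\log(Z-\delta_i)$ obtained under $s_i^m$ whenever $\delta_i>0$. Writing the payoff gap as
\begin{equation}
u_i(s_i^h,\boldsymbol{s}_{-i}) - u_i(s_i^m,\boldsymbol{s}_{-i}) = \alpha\log\frac{Z}{Z-\delta_i} + \beta\log\frac{\tilde{d}_i}{d_i},
\end{equation}
the first summand is always strictly positive, while the second summand is the only term that can favor moving outside, namely when relocating increases $i$'s aggregate separation from the fixed neighbor set $\mathcal{N}_i$ so that $d_i>\tilde{d}_i$.

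The main obstacle, and the sole place where the hypothesis $\alpha>\beta$ is needed, is to argue that the guaranteed isolation gain outweighs this worst-case social-distancing loss. In the benign regime $d_i\le\tilde{d}_i$ both terms are nonnegative and dominance holds for any positive $\alpha,\beta$; the delicate regime is $d_i>\tilde{d}_i$, where I would bound the distance term by exploiting that $\tilde{d}_i$ and $d_i$ are sums over the \emph{same} neighbor set $\mathcal{N}_i$ and differ only through player $i$'s own coordinate (home versus final position). This lets one control the ratio $d_i/\tilde{d}_i$ in terms of the bounded deviation and reduce the required inequality to $\alpha\log\frac{Z}{Z-\delta_i}\ge\beta\log\frac{d_i}{\tilde{d}_i}$; since $\alpha>\beta$ forces the exponent $\beta/\alpha<1$ and thereby shrinks the penalizing ratio $(d_i/\tilde{d}_i)^{\beta/\alpha}$ toward unity, the strictly positive isolation factor absorbs it and the gap stays nonnegative for all $\boldsymbol{s}_{-i}$. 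Establishing this last comparison rigorously is the heart of the argument; once it is in hand, applying it to every $i\in\mathcal{N}$ and invoking Definition \ref{Def_dominant_strategy_equilibrium} completes the proof of Proposition \ref{Theo_dominant_strategy}.
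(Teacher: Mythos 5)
Your reduction is the same as the paper's: show that $s_i^h$ is dominant for every player, invoke Definitions \ref{Def_dominant_strategy} and \ref{Def_dominant_strategy_equilibrium}, and use the fact that a dominant strategy equilibrium is a Nash equilibrium. The gap sits exactly where you yourself locate ``the heart of the argument'': the regime $d_i>\tilde{d}_i$. There you claim that $\alpha>\beta$ lets the isolation term absorb the social-distancing loss, i.e.\ that $\alpha\log\frac{Z}{Z-\delta_i}\ge\beta\log\frac{d_i}{\tilde{d}_i}$ holds because the exponent $\beta/\alpha<1$ ``shrinks the penalizing ratio toward unity.'' That is not a proof, and the inequality is false in general, because the two logarithms are controlled by unrelated magnitudes: take $Z=1400$, $\delta_i=1$, $\tilde{d}_i=10$, $d_i=1000$, $\alpha=3$, $\beta=1$; then $\alpha\log\frac{Z}{Z-\delta_i}\approx 0.002$ while $\beta\log\frac{d_i}{\tilde{d}_i}\approx 4.6$, so moving out strictly improves player $i$'s payoff and $s_i^h$ is not dominant. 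No triangle-inequality bound over the common neighbor set $\mathcal{N}_i$ can repair this, since $\delta_i/Z$ can be made arbitrarily small while $d_i/\tilde{d}_i$ stays large; the hypothesis $\alpha>\beta$ compares only the coefficients, never these ratios.

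The paper does not prove this case either --- it removes it by a modeling assumption. In its $N$-player argument it stipulates that the strategy $s_i^m$ means visiting a crowded place, and that crowding forces $d_i<\tilde{d}_i$; then both summands of the payoff gap are positive (since $Z>Z-\delta_i$ and $\tilde{d}_i>d_i$), and dominance is immediate, with $\alpha>\beta$ playing no essential role in that step. (In its two-player warm-up the paper does assert, much as you do, that $\alpha>\beta$ makes the differences in (\ref{Eq_payoff_1_1}) nonnegative even when the distance term is negative; that assertion shares the weakness of yours, but the paper's final $N$-player conclusion rests on the crowding assumption, not on it.) So to complete your proof you must either import that assumption $d_i<\tilde{d}_i$ explicitly --- after which your displayed gap is a sum of two nonnegative terms and you are done --- or add quantitative hypotheses relating $Z$, $\delta_{\max}$, $\tilde{d}_i$, and $d_i$ under which the absorption inequality can genuinely be verified; $\alpha>\beta$ alone cannot deliver it.
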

\begin{proof}[Proof]
	Let us consider a 2-individuals simple matrix game as shown in Table \ref{Table_game} with the mentioned strategies. \textcolor{black}{For simplicity, we consider Laplacian distance $\Delta$ that each individual can pass in any timestamp.}	
	\begin{table}
		\setlength{\extrarowheight}{2pt}
		\centering
		\caption{Game matrix for 2-individuals}\label{Table_game}
		\begin{tabular}{cc|c|c|}
			& \multicolumn{1}{c}{} & \multicolumn{2}{c}{P $2$}\\
			& \multicolumn{1}{c}{} & \multicolumn{1}{c}{$s_2^h$}  & \multicolumn{1}{c}{$s_2^m$} \\\cline{3-4}
			\multirow{2}*{P $1$}  & $s_1^h$ & $(u_1(s_1^h,s_2^h),u_2(s_1^h,s_2^h))$ & $(u_1(s_1^h,s_2^m),u_2(s_1^h,s_2^m))$ \\\cline{3-4}
			& $s_1^m$ & $(u_1(s_1^m,s_2^h),u_2(s_1^m,s_2^h))$ & $(u_1(s_1^m,s_2^m),u_2(s_1^m,s_2^m))$ \\\cline{3-4}
		\end{tabular}
	\end{table}
	
	Thus, the utilities of $P_1$:
	\begin{equation}
	\begin{aligned}
	u_1(s_1^h,s_2^h)&=\alpha \log Z + \beta\log d_1,\\
	u_1(s_1^h,s_2^m)&=\alpha \log Z + \beta\log (d_1\pm\Delta),\\
	u_1(s_1^m,s_2^h)&=\alpha \log (Z-\Delta) + \beta\log (d_1\pm\Delta),\\
	u_1(s_1^m,s_2^m)&=\alpha \log (Z-\Delta) + \beta\log (d_1\pm 2\Delta),
	\end{aligned}\label{Eq_payoff_1}
	\end{equation}
	where $\pm$ indicates the movement of individual to other individual and opposite direction, respectively. Now,
	\begin{equation}
	\begin{aligned}
	u_1(s_1^h,s_2^h)-u_1(s_1^m,s_2^h) =&\\ \alpha \log \bigg(\frac{Z}{Z-\Delta}\bigg) + \beta\log \bigg(\frac{d_1}{d_1\pm\Delta}\bigg),\\
	u_1(s_1^h,s_2^m)-u_1(s_1^m,s_2^m) =& \\ \alpha \log \bigg(\frac{Z}{Z-\Delta}\bigg) + \beta\log \bigg(\frac{d_1\pm\Delta}{d_1\pm 2\Delta}\bigg).	
	\end{aligned}\label{Eq_payoff_1_1}
	\end{equation}
	As $\alpha>\beta$, so the following conditions hold from (\ref{Eq_payoff_1_1}):
	\begin{equation}
	\begin{aligned}
	u_1(s_1^h,s_2^h)-u_1(s_1^m,s_2^h) \ge 0,\\
	u_1(s_1^h,s_2^m)-u_1(s_1^m,s_2^m) \ge 0,	
	\end{aligned}\label{Eq_payoff_1_2}
	\end{equation}
	Hence, rewriting (\ref{Eq_payoff_1_2}), we get the followings:
	\begin{equation}
	\begin{aligned}
	u_1(s_1^h,s_2^h)\ge u_1(s_1^m,s_2^h),\\
	u_1(s_1^h,s_2^m)\ge u_1(s_1^m,s_2^m).
	\end{aligned}\label{Eq_payoff_1_3}
	\end{equation}
	Thus, $s_1^h$ is the dominant strategy of $P_1$. Moreover, for the individual $P_2$, the utilities are as follows:
	\begin{equation}
	\begin{aligned}
	u_2(s_1^h,s_2^h)&=\alpha \log Z + \beta\log d_2,\\
	u_2(s_1^m,s_2^h)&=\alpha \log Z + \beta\log (d_2\pm\Delta),\\
	u_2(s_1^h,s_2^m)&=\alpha \log (Z-\Delta) + \beta\log (d_2\pm\Delta),\\
	u_2(s_1^m,s_2^m)&=\alpha \log (Z-\Delta) + \beta\log (d_2\pm 2\Delta).
	\end{aligned}\label{Eq_payoff_2}
	\end{equation}
	Now,
	\begin{equation}
	\begin{aligned}
	u_2(s_1^h,s_2^h)-u_2(s_1^h,s_2^m) =&\\ \alpha \log \bigg(\frac{Z}{Z-\Delta}\bigg) + \beta\log \bigg(\frac{d_2}{d_2\pm\Delta}\bigg),\\
	u_2(s_1^m,s_2^h)-u_2(s_1^m,s_2^m) =&\\ \alpha \log \bigg(\frac{Z}{Z-\Delta}\bigg) + \beta\log \bigg(\frac{d_2\pm\Delta}{d_2\pm 2\Delta}\bigg).
	\end{aligned}\label{Eq_payoff_2_1}
	\end{equation}
	As $\alpha>\beta$, so the following conditions hold from (\ref{Eq_payoff_2_1}):
	\begin{equation}
	\begin{aligned}
	u_2(s_1^h,s_2^h)-u_2(s_1^m,s_2^m) \ge 0,\\
	u_2(s_1^m,s_2^h)-u_2(s_1^m,s_2^m) \ge 0,	
	\end{aligned}\label{Eq_payoff_2_2}
	\end{equation}
	Hence, rewriting (\ref{Eq_payoff_2_2}), we get the followings:
	\begin{equation}
	\begin{aligned}
	u_2(s_1^h,s_2^h)\ge u_2(s_1^h,s_2^m),\\
	u_2(s_1^m,s_2^h)\ge u_2(s_1^m,s_2^m).	
	\end{aligned}\label{Eq_payoff_2_3}
	\end{equation}
	Therefore, $s_2^h$ is the dominant strategy of individual $P_2$.	
	
	\textcolor{black}{When there are $N$-individuals ($N>2$) in the game, every individual still has the same two strategies as in the case of a 2-individuals game. However, the dimension of the game matrix, which is actually representing the payoff of every individual for each strategy, will be changed from $2\times2$ to $2^N$.} The incentive of individual $i\in \mathcal{N}$ (takes strategy $s_i^h$, without considering others strategy), is given as follows:
	\begin{equation}
	u_i(s_i^h,\dots) = \alpha \log Z +\beta \log \tilde{d}_i.
	\end{equation}
	However, if the individual $i\in \mathcal{N}$ takes the strategy $s_i^m$, i.e., the individual visits some crowded place like market, shop, train station, school, or other location, then a person may come in close contact with many others. Thus, the incentive of individual $i\in \mathcal{N}$ with this strategy is given as follows:
	\begin{equation}
	u_i(s_i^m,\dots) = \alpha \log (Z-\delta_i) +\beta \log d_i,
	\end{equation}
	where $\delta_i$ is calculated from (\ref{Equ_delta_i}) and $d_i$ is measured from (\ref{Equ_d_i}) for that particular location. Moreover, $d_i<\tilde{d}_i$ as these places are crowded and individuals are in short distance with one another. Hence, $u_i(s_i^h,\dots)>u_i(s_i^m,\dots)$ as $Z>Z-\delta_i$ and $\tilde{d}_i>d_i$ for any individual $i\in \mathcal{N}$. That means, $s_i^h$ is the dominant strategy for individual $i\in \mathcal{N}$ irrespective of the strategies of other individuals in the game $\mathcal{G}$.
	Thus, there is a strategy profile $\boldsymbol{s}^*=\{s_1^h,s_2^h,\cdots,s_N^h\}\in \mathcal{S}$ where each element $s_i^*$ is a dominant strategy. Hence, by Definition \ref{Def_dominant_strategy_equilibrium}, $\boldsymbol{s}^*$ is a dominant strategy equilibrium. Moreover, a dominant strategy equilibrium is always a Nash equilibrium \cite{Han2012}. Hence, the game $\mathcal{G}$ has always a pure strategy Nash equilibrium.
\end{proof} 


Thus, Nash equilibrium is the solution of the noncooperative game $\mathcal{G}$. In this equilibrium, no individual of $\mathcal{N}$ has the benefit of changing their strategy while others remain in their strategies. That means, the utility of each individual $i\in \mathcal{S}$ is maximized in this strategy, and hence ultimately maximize the utility of (\ref{Equ_U}). In fact, incentivizing the social distancing mechanism is promoting social distancing to each individual. To this end, maximizing $U$ of (\ref{Equ_U}) ultimately maximize the original objective function of (\ref{Opt1}).

Moreover, the Nash equilibrium point has a greater implication on controlling the spread of COVID-19 in the society. At the NE point, every individual stays at home, \textcolor{black}{and that is the only NE point in our game environment}. So, if someone gets affected by COVID-19, the individual will not go in contact with others. Similarly, an unaffected individual has no probability to come in contact with an affected individuals. Unfortunately, the family members have the chance to be affected if they don't follow fair distance and health norms.

\textcolor{black}{For calculating the utility of each player (i.e., individual) $i\in \mathcal{N}$ in case of any strategy, the positional information of $\mathcal{N}_i$ individuals who are in close proximity of $i\in \mathcal{N}$ are necessary. In order to obtain this information, each individual $i\in \mathcal{N}$ communicates with GPS satellite and it sends the information of $\mathcal{N}_i, \forall i\in \mathcal{N}$  individuals to the corresponding player $i\in \mathcal{N}$.  Hence, GPS satellite needs to send $\sum_{i\in \mathcal{N}}|\mathcal{N}_i|=C|\mathcal{N}|$ information, where $C$ can be a fixed number to represent the close-proximity individuals and equal for every  $i\in \mathcal{N}$, as a whole for $|\mathcal{N}|$ individuals. Thus,  the complexity of the game is proportional to the number of players $|\mathcal{N}|$ of the game, and will not increase exponentially.}

\section{Sustainability of Lockdown Policy with the System Model}
\label{Sustainability}
The sustainability of the lockdown policy can be interpreted by using the outcome of the Nash equilibrium point that is achieved in the noncooperative game in Section \ref{Sol_Game}.

The total amount of incentive a particular time period is presented in (\ref{Equ_U}). In a particular day, we have $T_s=\frac{24\times 60}{T_0}$ time period where $T_0$ is the length of a time period in minutes. Thus, we can denote the incentive of a time stamp $t_s$ in a particular day $p$ as follows:
\begin{equation}\label{Equ_U10}
U_p^{t_s}(\boldsymbol{\delta},\boldsymbol{d})=\alpha \sum_{i\in \mathcal{N}}\log(Z-\delta_i) +  \beta\sum_{i\in \mathcal{N}}\log d_i.
\end{equation}
Hence, the amount of resources/money that is necessary to incentivize individuals in a particular day, $p$ can be expressed as follows:
\begin{equation}\label{Equ_U1}
U_p= \sum_{t_s=1}^{T_s}U_p^{t_s}(\boldsymbol{\delta},\boldsymbol{d}).
\end{equation}

Now, if we are interested to find the sustainability of lockdown policy for a particular country till a certain number of days, denoted by $P$, we have to satisfy the following inequality:
\begin{equation}\label{Equ_U1_2}
\sum_{p=1}^{P}U_p \le R_0 + \sum_{p=1}^{P}r_p,
\end{equation}
where $R_0$ is the amount of resource/money of a particular country at the starting of lockdown policy that can be used as incentive and $r_p$ is the collected resources in a particular day, $p$, of the lockdown period. Here, $r_p$ includes governmental revenue and donation from different individuals, organizations and even countries. Moreover, the unit of $\alpha$, $\beta$, $R_0$ and $r_p$ are same.

If we assume for simplicity that $U_p$ and $r_p$ are same for every day and they are denoted by $\tilde{U}$ and $\tilde{r}$, respectively, then we can rewrite (\ref{Equ_U1_2}) as follows:  
\begin{equation}\label{Equ_U1_3}
P\times \tilde{U} \le R_0 + P\times \tilde{r}.
\end{equation}
Hence, if we are interested to find the upper limit of sustainable days for a particular country using lockdown policy, then we have the following equality:
\begin{equation}\label{Equ_U1_4}
P\times \tilde{U} = R_0 + P\times \tilde{r}.
\end{equation}
Thus, by simplifying (\ref{Equ_U1_4}), we have the following:
\begin{equation}\label{Equ_U1_5}
P = \frac{R_0}{\tilde{U}-\tilde{r}}.
\end{equation}

Here, the sustainable days $P$ depends on $R_0$, $\tilde{U}$, and $\tilde{r}$. However, we cannot change $R_0$ but government can predict $\tilde{r}$. Moreover, depending on $R_0$ and $\tilde{r}$, government can formulate its policy to set $\alpha$ and $\beta$ so that individuals are encouraged to follow the lockdown policy. Alongside, we cannot continue lockdown policy infinitely based upon the limited total resources. Hence, the governments should formulate and update its lockdown policy based on the predicted sustainable capability to handle COVID-19, otherwise resource crisis will be a further bigger worldwide pandemic.

\section{Numerical Analysis}
\label{Per_eva}
In this section, we assess the proposed approach using numerical analyses. We consider an area of $1,000$ m $\times$ $1,000$ m for our analysis where individuals' position are randomly distributed. The value of the principal simulation parameters are shown in the Table \ref{Table1}.
\begin{table}
	\centering
	\renewcommand{\arraystretch}{1.5}
	\caption{Value of the principal simulation parameters}\label{Table1}
	\begin{tabular}{|c|c|} \hline
		$\textbf{Symbol}$ &$\textbf{Value}$\\ \hline
		$N$ & $\{500, 1000, 1500, 2000\}$\\ \hline
		$\alpha$ & $3.0$  \\ \hline
		$\beta$ & $1.0$ \\ \hline
		$\omega$ & $[0,1]$\\ 
		\hline
		$Z$ & $1400$ \\ \hline
		$R_0$ & \textcolor{black}{$\{5\times 10^{23},5.5\times 10^{23},6\times 10^{23},6.5\times 10^{23},7\times 10^{23}\}$}\\ 
		\hline
		$\tilde{r}$ & \textcolor{black}{$\{\tilde{U}\times0.10, \tilde{U}\times0.20, \tilde{U}\times0.30, \tilde{U}\times0.40, \tilde{U}\times0.50\}$}\\ 
		\hline
	\end{tabular}
	\renewcommand{\arraystretch}{1}
\end{table}

Figure \ref{Fig_payoff_omega} illustrates a comparison between home isolation (stay at home) and random location in the considered area for a varying value of $\omega$. In this figure, we consider two cases of $N = 500$ and $N=1,000$. In both the cases, home isolation (quarantine) is beneficial over staying in random location and the differences between two approaches are increased with the increasing value of $\omega$. Moreover, the difference of payoffs between two approaches are increased with the increasing value of $\omega$ as the more importance are given in home isolation.     
\begin{figure}
	\centering
	\includegraphics[width=0.5\textwidth]{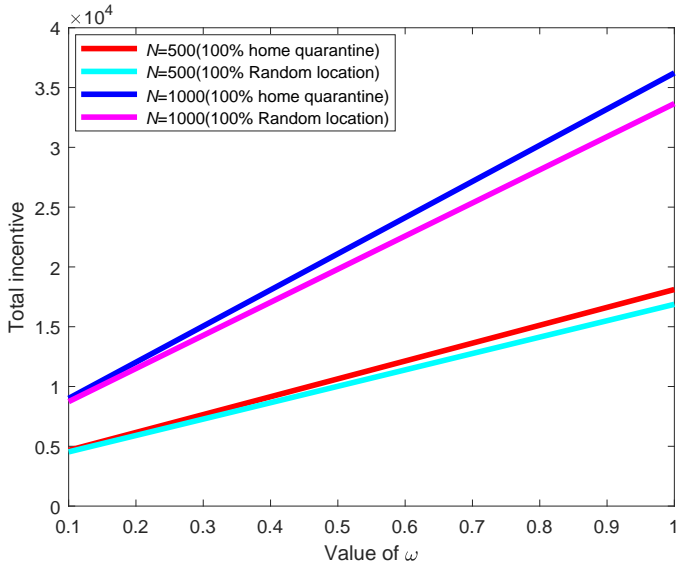}
	\caption{Comparison of incentive (in log scale) for varying value of $\omega$.}\label{Fig_payoff_omega}
\end{figure}

\begin{figure}
	\begin{subfigure}{.5\textwidth}
		\centering
		\includegraphics[width=.8\linewidth,height=5cm]{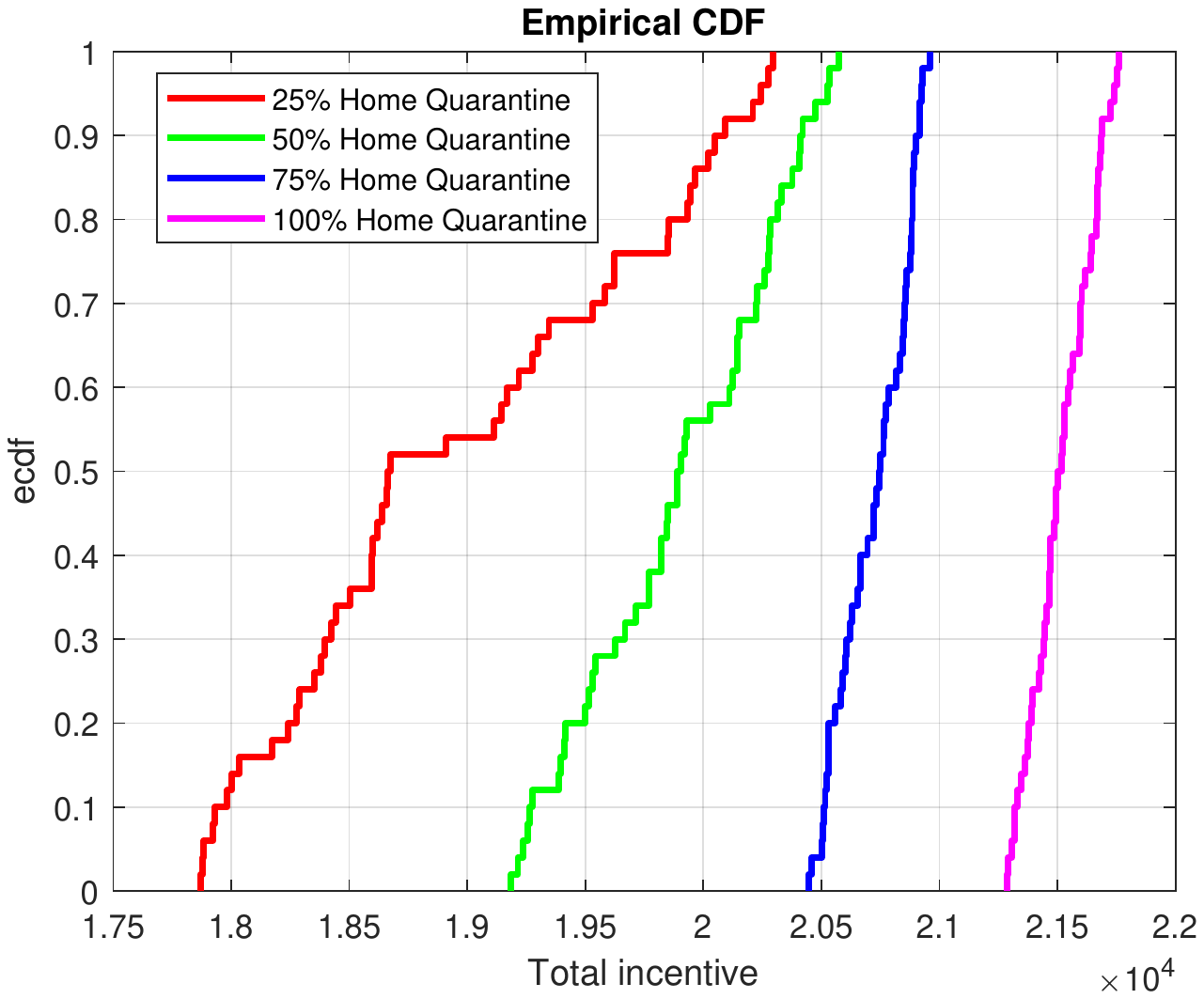}
		\caption{$N=500$}
		\label{fig:ecdf500}
	\end{subfigure}%
	\\
	\begin{subfigure}{.5\textwidth}
		\centering
		\includegraphics[width=.8\linewidth,height=5cm]{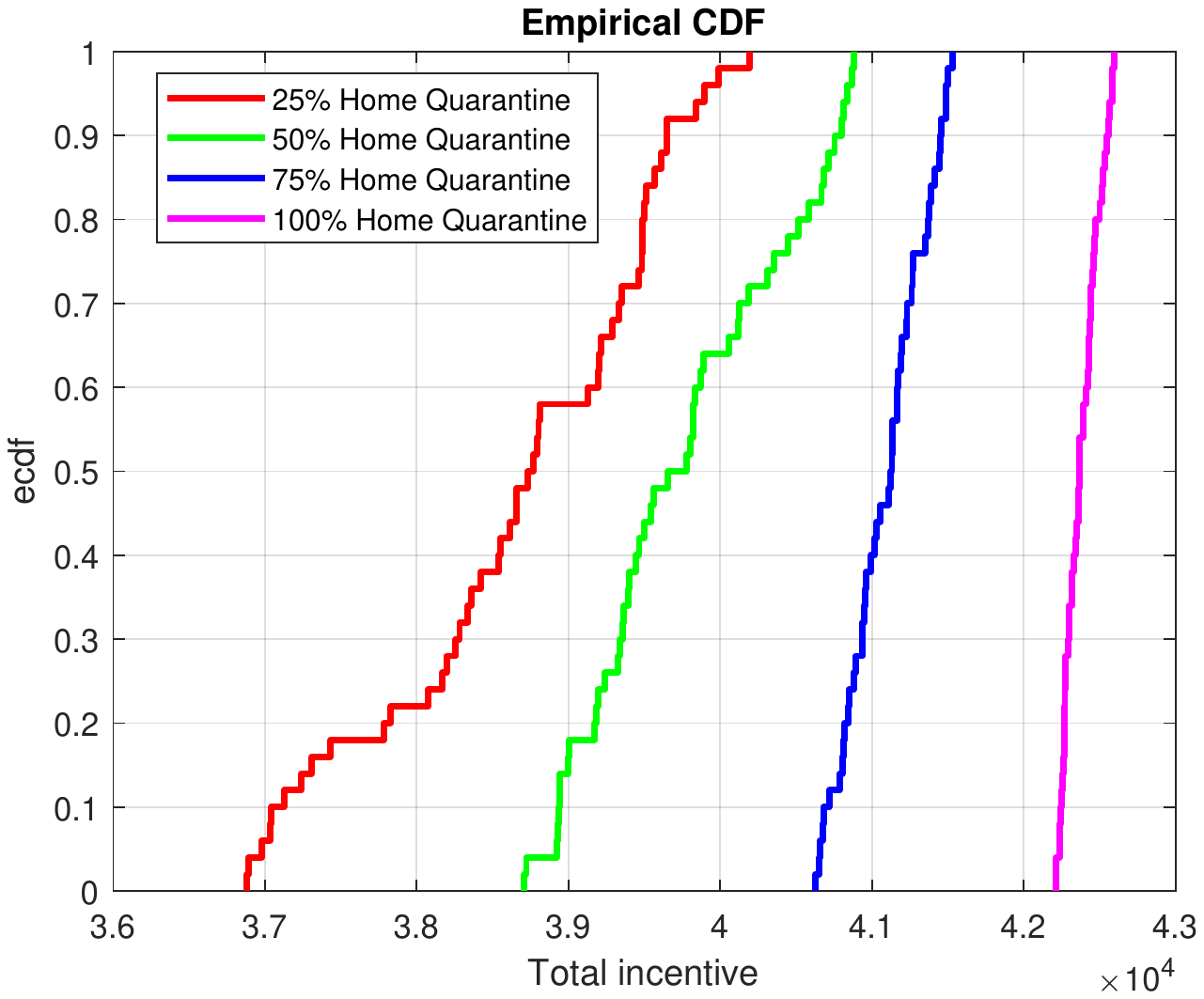}
		\caption{$N=1,000$}
		\label{fig:ecdf1000}
	\end{subfigure}
	\\
	\begin{subfigure}{.5\textwidth}
		\centering
		\includegraphics[width=.8\linewidth,height=5cm]{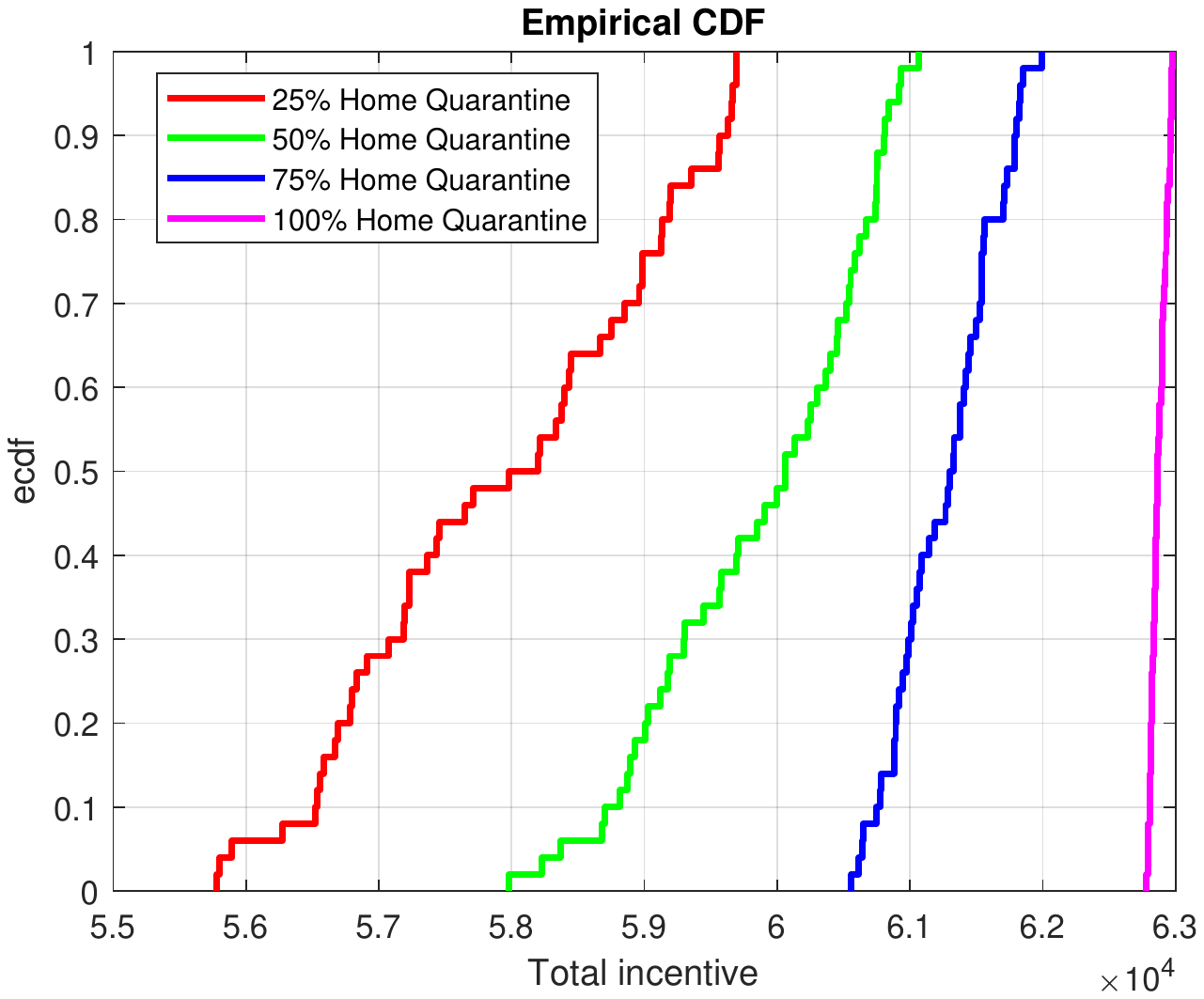}
		\caption{$N=1,500$}
		\label{fig:ecdf1500}
	\end{subfigure}%
	\\
	\begin{subfigure}{.5\textwidth}
		\centering
		\includegraphics[width=.8\linewidth,height=5cm]{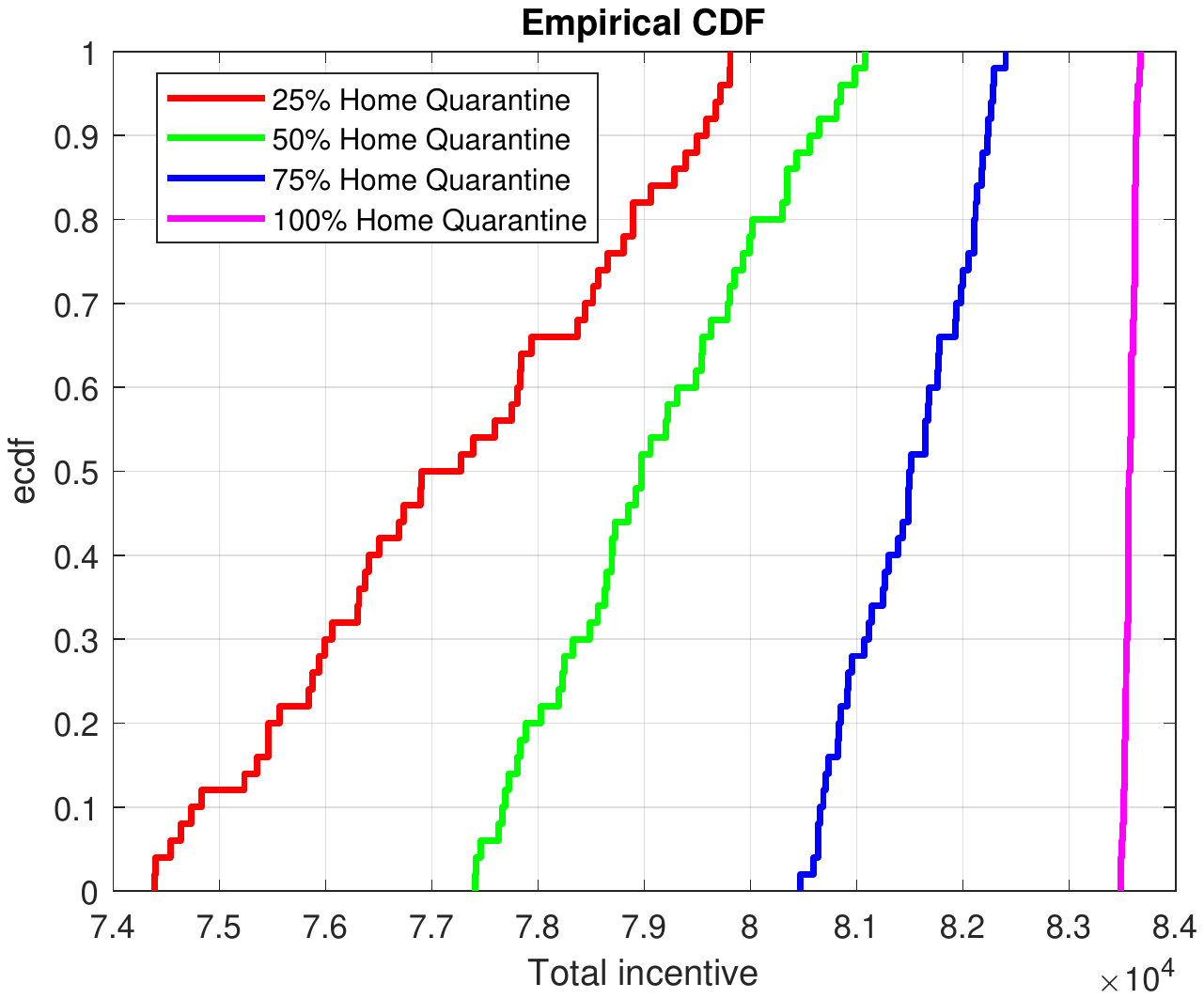}
		\caption{$N=2,000$}
		\label{fig:ecdf2000}
	\end{subfigure}%
	\caption{Ecdf of incentives (in log scale) for different value of $N$ with $\alpha=3.0$ and $\beta=1.0$ using $50$ runs.}
	\label{fig:ecdf}
\end{figure}

Figure \ref{fig:ecdf} shows the empirical cumulative distribution function (ecdf) of incentives for different numbers of individuals. The figure revels that the incentive values increase with the increasing number of home quarantine individuals in all the four cases. Figure \ref{fig:ecdf500} exhibits that the incentives are below $19,000$, and $20,000$ for $50\%$, and $48\%$ sure, respectively, for $25\%$ and $50\%$ home quarantine cases whereas the incentives are $90\%$ sure in between $20,500$ and $21,000$ for $75\%$ home isolation case. Further, the same values are at least $21,500$ for $50\%$ sure in case of full home isolation. Figure \ref{fig:ecdf1000} depicts that the incentive of being below $38,000$ is $40\%$ sure for $25\%$ home isolation case, however, the same values of being above $40,000$, and $41,000$ are $40\%$, and $60\%$,  sure, respective, for $50\%$, and $75\%$ cases. Moreover, for $100\%$ home isolation case, the values are in between $42,000$ to $43,000$ for sure. The incentives for $25\%$, $50\%$, $75\%$, and $100\%$ home isolation cases are above $57,000$, $59,000$, $61,000$, and $63,000$, respectively, with probability $0.60$, $0.65$, $0.65$, and $0.80$, respectively, as shown in Figure \ref{fig:ecdf1500}. Additionally, the same values are at least $77,000$, $79,000$, $81,000$, and $83,500$ with $0.50$, $0.50$, $0.72$, and $1.00$ probabilities, respectively, which is presented in Figure \ref{fig:ecdf2000}.
\begin{figure}
	\centering
	\includegraphics[width=.5\textwidth]{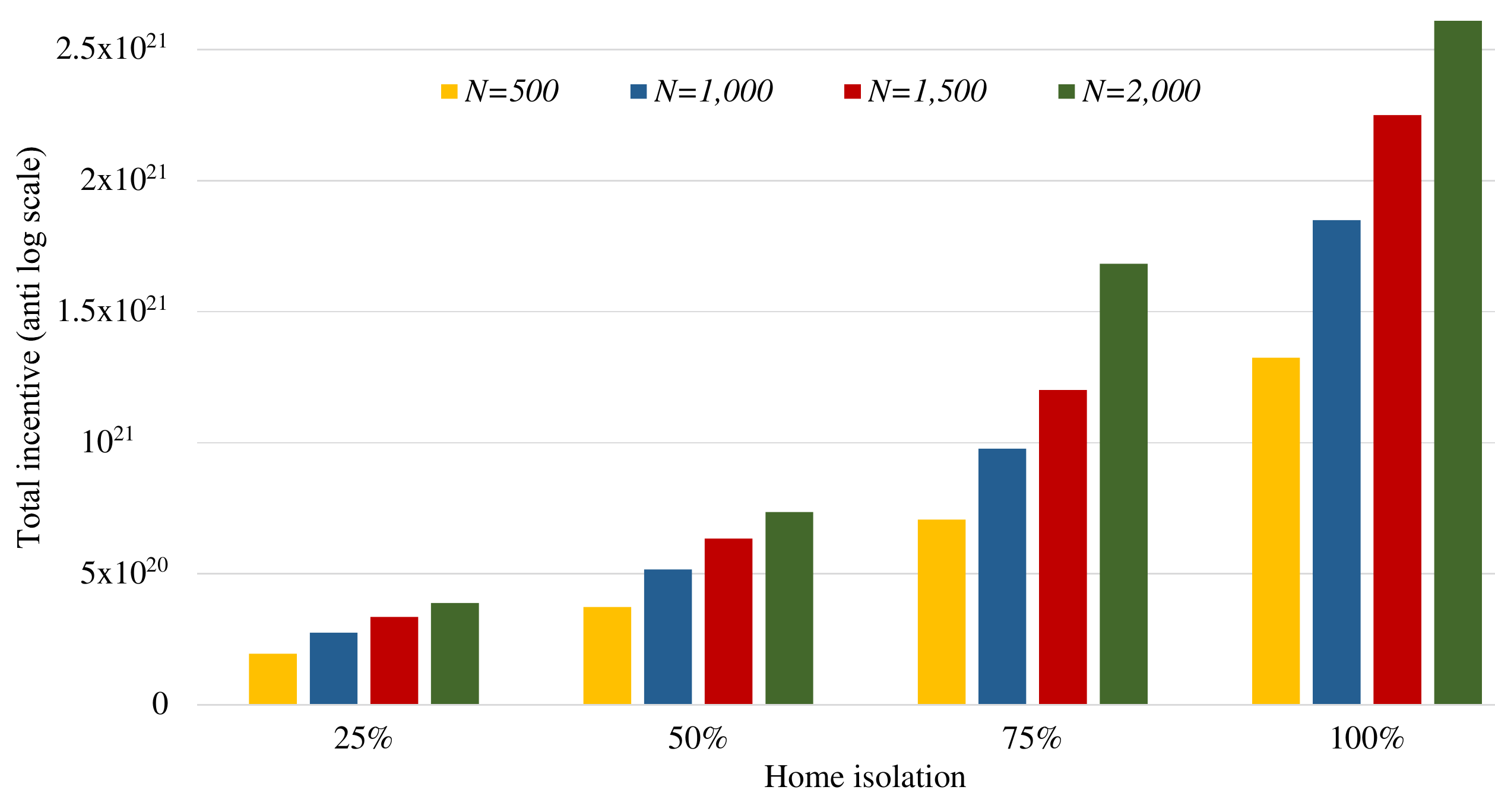}
	\caption{\textcolor{black}{Total incentive (average of 50 runs) for varying percentage of home isolation individuals when $\alpha=3.0$ and $\beta=1.0$.}}\label{Fig_bar}
\end{figure}

\begin{figure}
	\centering
	\includegraphics[width=.5\textwidth]{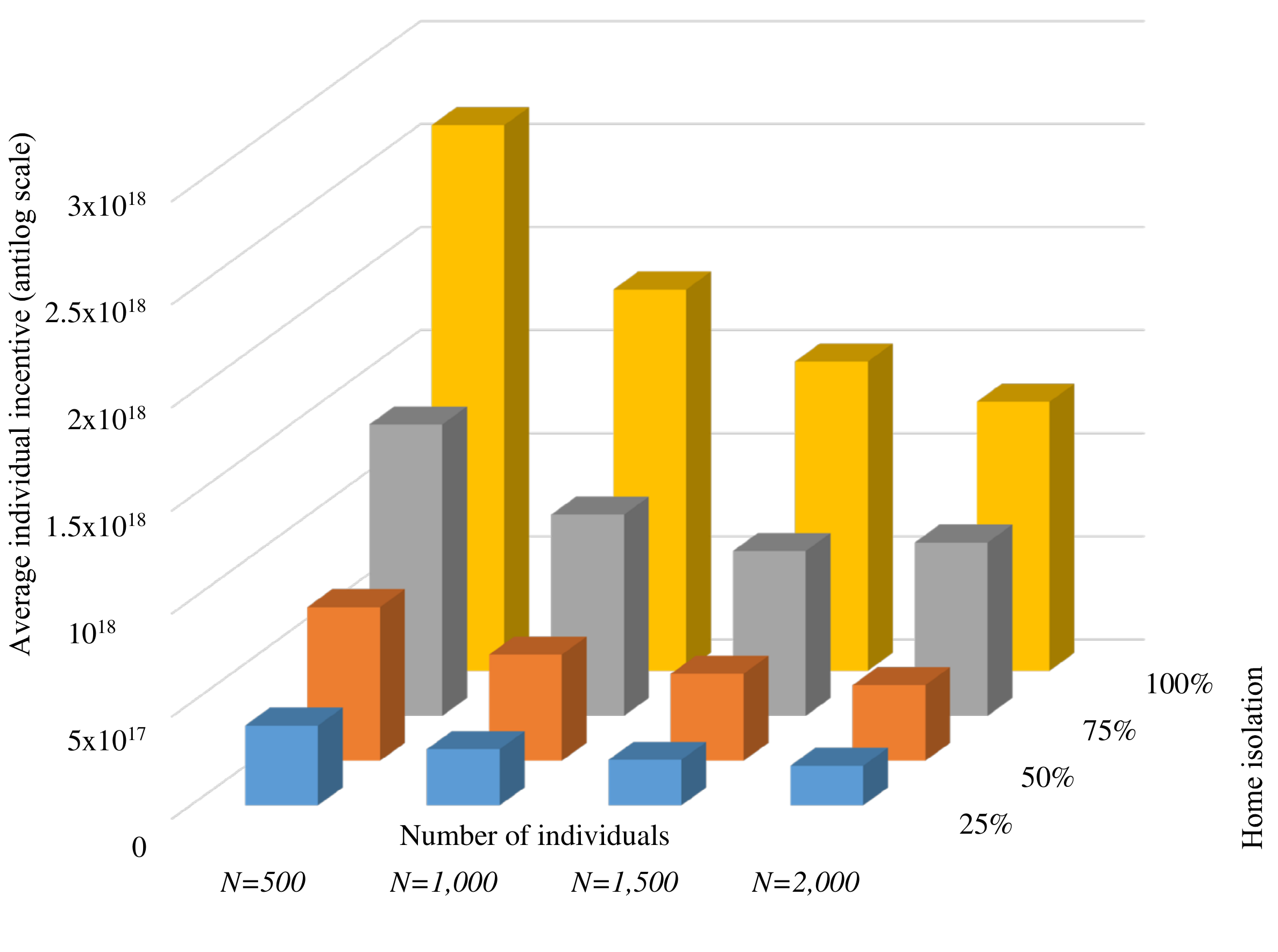}
	\caption{\textcolor{black}{Average individual incentive for varying percentage of home quarantine individuals when $\alpha=3.0$ and $\beta=1.0$.}}\label{Fig_ind_payoff}
\end{figure}
The total incentive (averaging of 50 runs) for varying percentage of home isolation individuals with different sample size are shown in Figure \ref{Fig_bar}. From this figure, we observe that the total payoff increases with increasing number of home isolation individuals for all considered cases. The incentives are $578\%$, $571\%$, $571\%$, and $571\%$ better from home quarantine of $25\%$ to $100\%$ for $N=500$, $N=1,000$, $N=1,500$, and $N=2,000$, respectively.  Moreover, for a particular percentage of home isolation, the total incentive is related with the sample size. In case of $50\%$ individuals in the home isolation, the incentive for $N=2,000$ is $97.08\%$, $42.50\%$, and $15.96\%$ more than that of $N=500$, $N=1,000$, and $N=1,500$, respectively. 

Figure \ref{Fig_ind_payoff} shows the average individual payoff for varying parentage of home isolation individuals for different scenarios. The figure exhibits that the average individual incentive increases with an increasing percentage of home isolation as the deviation $\boldsymbol{\delta}$ decreases and hence, the value of home isolation incentive increases. For $N=500$, the incentive of $100\%$ home isolation is $85.25\%$ more than that of $25\%$ home isolation. Moreover, in a particular percentage of home isolation, the incentive decreases with an increasing number of considered individuals as the social distancing decreases due to the more number of individuals. In case of $50\%$ home isolation, the individual incentive for $N=500$ is $102.96\%$ more than that of $N=2,000$.

\begin{figure}
	\centering
	\includegraphics[width=.5\textwidth]{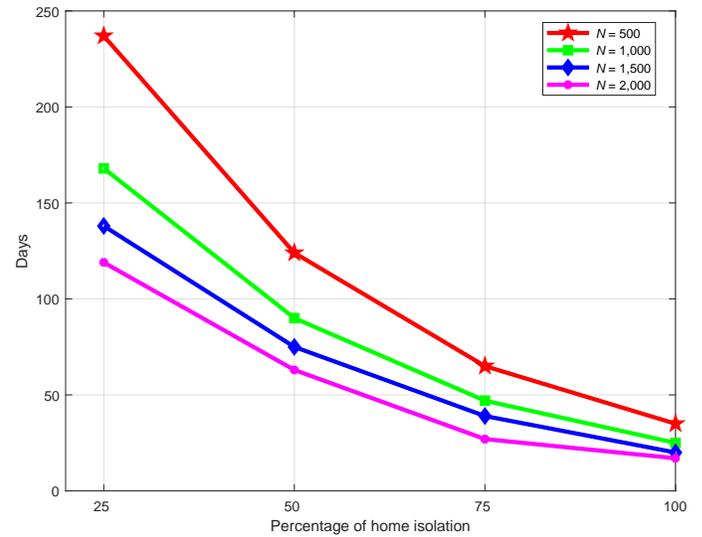}
	\caption{Maximum possible lockdown period with varying number of individuals when \textcolor{black}{$R_0=5\times 10^{23}$}, $\textcolor{black}{\tilde{r}=0.10\times\tilde{U}}$, and using total incentive shown in Figure $\ref{Fig_bar}$.}\label{Fig_day1}
\end{figure}
\begin{figure}
	\centering
	\includegraphics[width=.5\textwidth]{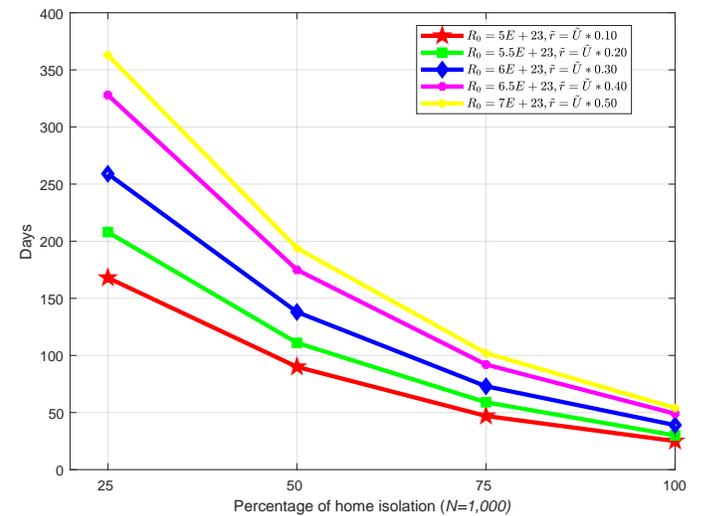}
	\caption{Maximum possible lockdown period with varying $R_0$ and $\tilde{r}$ with total payoff shown in Figure $\ref{Fig_bar}$.}\label{Fig_day2}
\end{figure}

Figure \ref{Fig_day1} shows the maximum possible lockdown period for a varying number of individuals within a fixed amount of resource $R_0$. The figure reveals that with the increasing percentage of home isolation individuals, the maximum lockdown period significantly decreases for all considered cases. \textcolor{black}{The reason behind this is that the more individuals are in home isolation, the more it is necessary to pay the incentives.} With a fixed amount of resources, a country with less individuals can survive a longer lockdown period. With more percentages of home isolation individuals, the number of \textcolor{black}{lockdown} period is less, and possible of spreading of COVID-19 is also less. Therefore, the governments can consider a trade-off between increasing expenditure as a incentive and lockdown period. For $1,000$ individuals, the maximum possible lockdown period for varying amount of $R_0$ and $\tilde{r}$ is presented in Figure \ref{Fig_day2}. The figure also illustrates that with the increasing percentages of home isolation individuals, the continuity of the lockdown period reduces for every scenarios. However, for a particular percentage of home isolation individuals where total number of individuals are fixed, a country can continue higher lockdown period who has more am amount of resources, $R_0$. Additionally, $\tilde{r}$ also play an important role to continue the lockdown period. 

\section{Discussion and Limitation of the study}
\label{Limitation}
\textcolor{black}{The limitations of the current study are summarized as follows:}
\begin{itemize}  
	\item \textcolor{black}{The governments are required to formulate and update their lockdown policy over time while mitigating the financial impact of COVID-19. From equations \eqref{Equ_U1_2} - \eqref{Equ_U1_5} we observe that such a policy update requires significant financial planning and prediction of the government revenue over a timespan while considering the socio-economic conditions of the population. In other words, the financial condition of the population has a considerable impact on the success of any distributed (i.e., less government control) or centralized (i.e., strict government control) lockdown policies.}
	
	\item \textcolor{black}{The proposed noncooperative game solution provides an analytical approach to attain the solution that keeps the total deviation from home $\delta$ by an individual from a vast population to the minimum for reducing the infection risk of COVID-19. Besides, such a game solution provides a tractable evaluation for the sustainability of the lockdown policy. However, like the other mathematical models of epidemic diseases (i.e., SI, SIR, SIRS, SEIR), incorporating the probabilities of infection and recovery for COVID-19 in the noncooperative game setting is still an open research question and requires further investigation.}   
\end{itemize}
\textcolor{black}{To this end, in the future, we will further study a stochastic game setting by incorporating epidemic cases as a dynamics of Markovian for capturing the uncertain behavior of pandemic. In particular, the policy for the government-controlled epidemic models will be analyzed by the multi-agent noncooperative reinforcement learning for coping with an unknown epidemic environment. Therefore, the government/institute will capable of taking a proactive policy measurement for enhancing the sustainability of any kind of epidemic.}
\section{Conclusions}
\label{Conc}
In this paper, we have introduced a mathematical model for controlling the outbreak of COVID-19 by augmenting isolation and social distancing features of individuals. We have solved the utility maximization problem by using a noncooperative game. Here, we have proved that staying home (home isolation) is the best strategy of every individual and there is a Nash equilibrium of the game. By applying the proposed model, we have also analyzed the sustainability period of a country with a lockdown policy. Finally, we have performed a detailed numerical analysis of the proposed model to control the outbreak of the COVID-19. In future, we will further study and compare with extended cases such as centralized and different game-theoretic models. In particular, an extensive analysis between the government-controlled spread or individual controlled spread under more diverse epidemic models.

\ifCLASSOPTIONcaptionsoff
  \newpage
\fi

\begin{IEEEbiography}[{\includegraphics[width=1in,height=1.25in,clip,keepaspectratio]{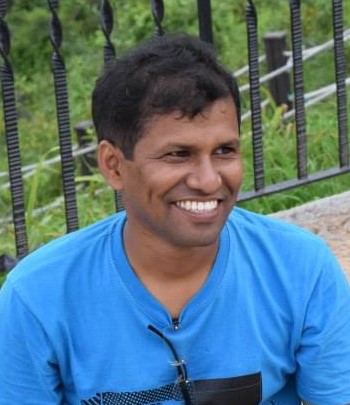}}]{Anupam Kumar Bairagi}(S’17- M’18) received his Ph.D. degree in Computer Engineering from Kyung Hee University, South Korea and B.Sc. and M.Sc. degree in Computer Science and Engineering from Khulna University (KU), Bangladesh. He is an associate professor in Computer Science and Engineering discipline, Khulna University, Bangladesh. His research interests include wireless resource management in 5G and beyond, Healthcare, IIoT, cooperative communication, and game theory. He has authored and coauthored around 40 publications including refereed IEEE/ACM journals, and conference papers. He has served as a technical program committee member in different international conferences. He is a member of  IEEE.
\end{IEEEbiography}

\vskip -1\baselineskip plus -1fil
\begin{IEEEbiography}[{\includegraphics[width=1in,height=1.25in,clip,keepaspectratio]{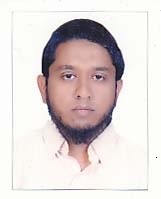}}]{Mehedi Masud} (Senior Member, IEEE) is a Professor in the Department of Computer Science at the Taif University, Taif, KSA. Dr. Mehedi Masud received his Ph.D. in Computer Science from the University of Ottawa, Canada. His research interests include machine learning, distributed algorithms, data security, formal methods, and health analytics. He has authored and coauthored around 70 publications including refereed IEEE/ACM/Springer/Elsevier journals, conference papers, books, and book chapters. He has served as a technical program committee member in different international conferences. He is a recipient of a number of awards including, the Research in Excellence Award from Taif University. He is on the Associate Editorial Board of IEEE Access, International Journal of Knowledge Society Research (IJKSR), and editorial board member of Journal of Software. He also served as a guest editor of ComSIS Journal and Journal of Universal Computer Science (JUCS). Dr. Mehedi is a Senior Member of IEEE, a member of ACM.
\end{IEEEbiography}

\vskip -1\baselineskip plus -1fil
\begin{IEEEbiography}[{\includegraphics[width=1in,height=1.25in,clip,keepaspectratio]{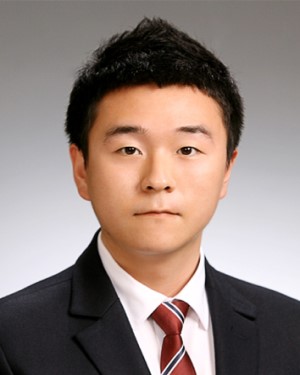}}]{Do Dyeon Kim} received his B.S. degree in Communication Engineering from Jeju National University, in 2014 and received M.S. degree from Kyung Hee University in 2017. He is currently working toward his Ph.D. degree at the Department of Computer Science and Engineering, Kyung Hee University. His research interests include Multi-access Edge Computing, Wireless Network Virtualization.
\end{IEEEbiography}

\vskip -1\baselineskip plus -1fil
\begin{IEEEbiography}[{\includegraphics[width=1in,height=1.25in,clip,keepaspectratio]{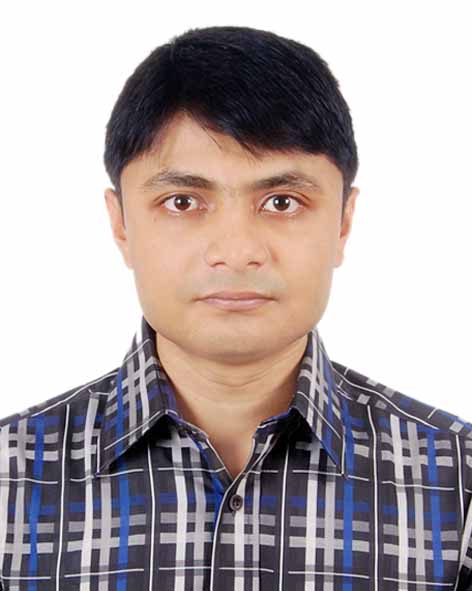}}]{Md.~Shirajum~Munir}
	(S'19) received the B.S. degree in computer science and engineering from Khulna University, Khulna, Bangladesh, in 2010. He is currently pursuing the Ph.D. degree in computer science and engineering at Kyung Hee University, Seoul, South Korea. He served as a Lead Engineer with the Solution	Laboratory, Samsung Research and Development Institute, Dhaka, Bangladesh, from 2010 to 2016. His current research interests include IoT network management, fog computing, mobile edge computing, software-defined networking, smart grid, and machine learning.
\end{IEEEbiography}

\vskip -1\baselineskip plus -1fil
\begin{IEEEbiography}[{\includegraphics[width=1in,height=1.25in,clip,keepaspectratio]{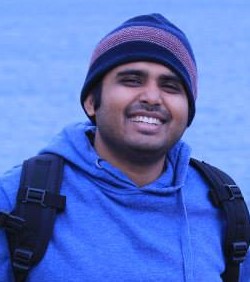}}]{ABDULLAH-AL NAHID} received the B.Sc. degree in electronics and communication engineering from Khulna University, Khulna, Bangladesh, in 2007, the M.Sc. degree in telecommunication engineering from the Institute for the Telecommunication Research (ITR), University of South Australia (UniSA), Australia, in 2014, and the Ph.D. degree from Macquarie University, Sydney Australia, in 2018. His research interests include machine learning, biomedical image processing, data classification, and smart grid.
\end{IEEEbiography}
\vskip -1\baselineskip plus -1fil
\begin{IEEEbiography}[{\includegraphics[width=1in,height=1.25in,clip,keepaspectratio]{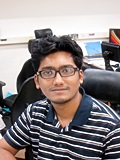}}]{Sarder Fakhrul Abedin} (S'18) received his B.S. degree in Computer Science from Kristianstad University, Kristianstad, Sweden, in 2013. He received his Ph.D. degree in Computer Engineering from Kyung Hee University, South Korea in 2020. Currently, he is serving as a Postdoctoral Researcher at the Department of Computer Science and Engineering, Kyung Hee University, Korea. His research interests include Internet of Things (IoT) network management, Fog computing, and Wireless networking. Dr. Abedin is a Member of the Korean Institute of Information Scientists and Engineers (KIISE).
\end{IEEEbiography}

\vskip -1\baselineskip plus -1fil
\begin{IEEEbiography}[{\includegraphics[width=1in,height=1.25in,clip,keepaspectratio]{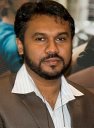}}]{Kazi Masudul Alam} is currently an Associate Professor of Computer Science and Engineering in Khulna University, Bangladesh. He has received his Ph.D. and M.C.S degrees from the University of Ottawa, Canada subsequently in 2017 and 2012. He completed his B.Engg. from Khulna University,  Bangladesh. During his graduate studies, he played the main role in the design process of the Social Internet of Vehicles, Digital Twin Architecture, Haptic E-Book, and Haptic EmoJacket. He has authored and co-authored 25 peer-reviewed international conference and journal articles with the ACM, IEEE, and Springer publishers. He regularly reviews journal articles from the top publishers of IEEE, ACM, Springer as well as reviews project proposals for European funding applications. He has received best paper awards in a few IEEE conferences. He was awarded prestigious academic and research scholarships of NSERC Canada Graduate-Doctoral and Ontario Graduate Scholarship-Masters during his graduate studies.
\end{IEEEbiography}

\vskip -1\baselineskip plus -1fil
\begin{IEEEbiography}[{\includegraphics[width=1in,height=1.25in,clip,keepaspectratio]{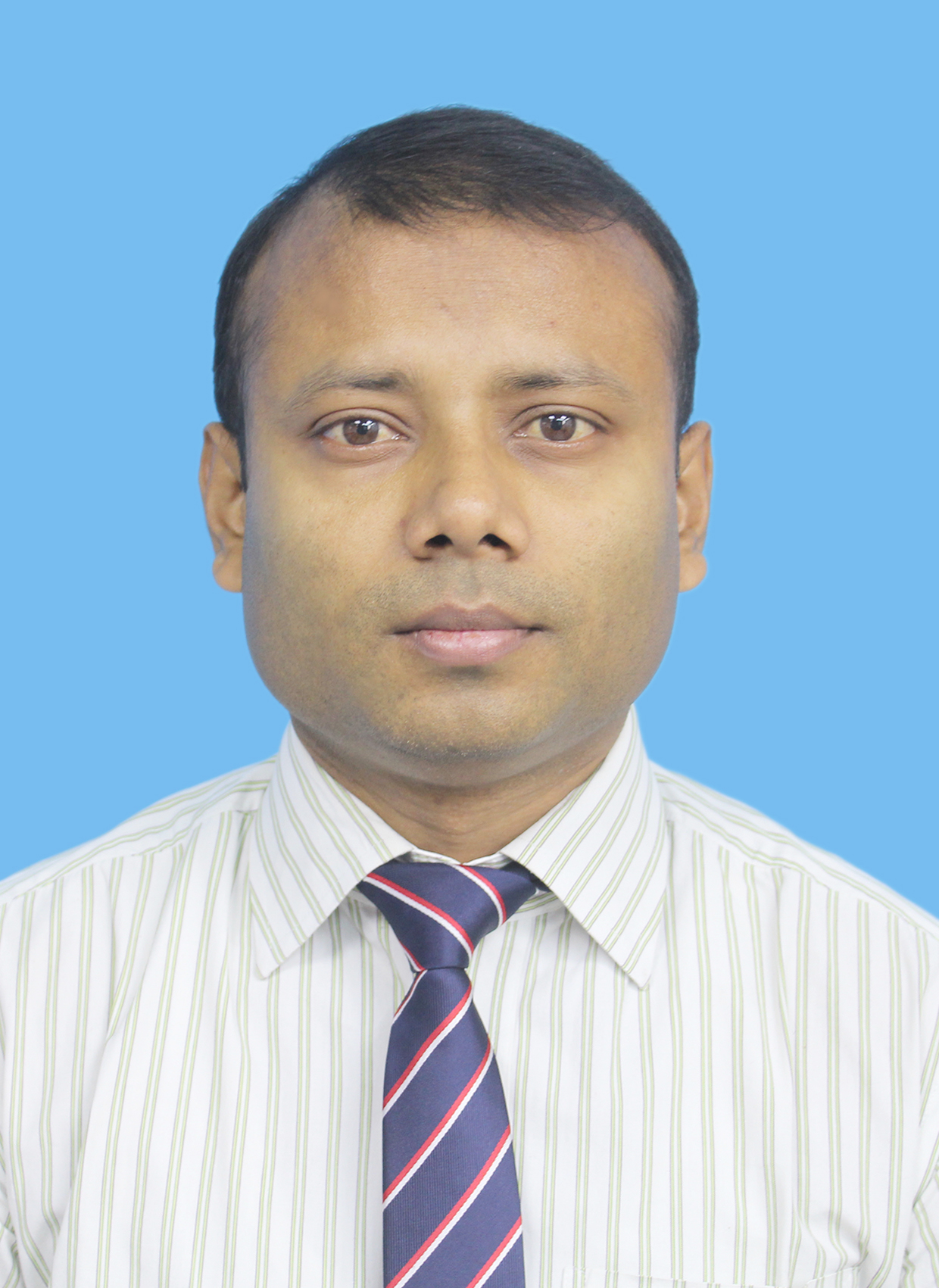}}]{Sujit Biswas} (M’20) received his M.Sc. degree in Computer Engineering from Northwestern Polytechnical University, China in 2015 and Ph.D degree in Computer Science and Technology from Beijing Institute of Technology, China. He is also an Assistant Professor with Computer Science and Engineering department, Faridpur Engineering College, University of Dhaka, Bangladesh. His basic research interest is in IoT, Blockchain, Mobile computing security and privacy, Data driven decision making, etc.
\end{IEEEbiography}

\vskip -1\baselineskip plus -1fil
\begin{IEEEbiography}[{\includegraphics[width=1in,height=1.25in,clip,keepaspectratio]{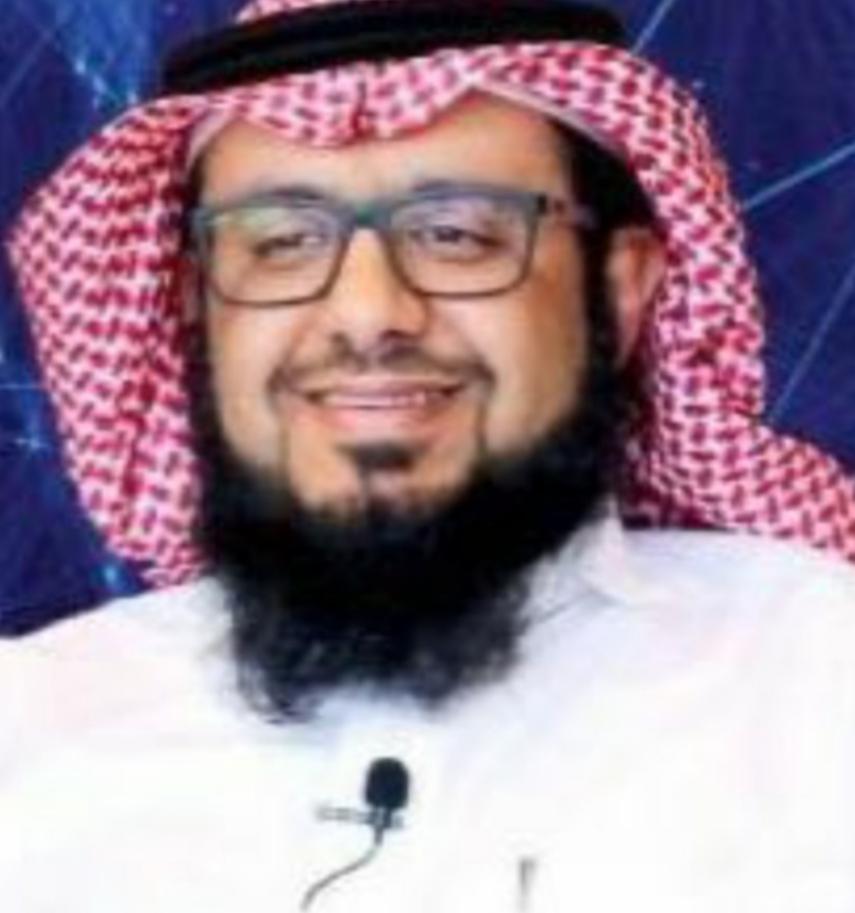}}]{Sultan S Alshamrani} is currently working as an associate professor at Taif University in Saudi Arabia and he is the head of  the department of Information Technology. DR. Sultan got his PhD from the University of Liverpool in UK and a masters degree in Information Technology (Computer Networks) from the University of Sydney in Sydney, Australia. DR. Sultan finished his bachelor's degree in computer science from Taif University in 2007 with General Grade "Excellent" With first honor and an accumulative GPA of (4.85) out of (5.00) where considered the highest GPA in the collage.
\end{IEEEbiography}

\vskip -2.25\baselineskip plus -1fil
\begin{IEEEbiography}[{\includegraphics[width=1in,height=1.25in,clip,keepaspectratio]{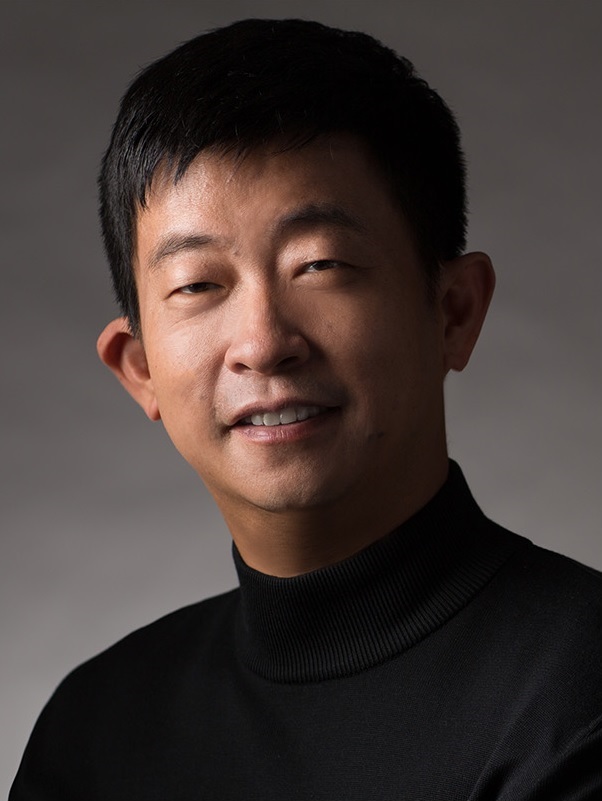}}]{Zhu Han} (S’01–M’04-SM’09-F’14) received the B.S. degree in electronic engineering from Tsinghua University, in 1997, and the M.S. and Ph.D. degrees in electrical and computer engineering from the University of Maryland, College Park, in 1999 and 2003, respectively. From 2000 to 2002, he was an R\&D Engineer of JDSU, Germantown, Maryland. From 2003 to 2006, he was a Research Associate at the University of Maryland. From 2006 to 2008, he was an assistant professor at Boise State University, Idaho. Currently, he is a John and Rebecca Moores Professor in the Electrical and Computer Engineering Department as well as in the Computer Science Department at the University of Houston, Texas. His research interests include wireless resource allocation and management, wireless communications and networking, game theory, big data analysis, security, and smart grid.  Dr. Han received an NSF Career Award in 2010, the Fred W. Ellersick Prize of the IEEE Communication Society in 2011, the EURASIP Best Paper Award for the Journal on Advances in Signal Processing in 2015, IEEE Leonard G. Abraham Prize in the field of Communications Systems (best paper award in IEEE JSAC) in 2016, and several best paper awards in IEEE conferences. Dr. Han was an IEEE Communications Society Distinguished Lecturer from 2015-2018, AAAS fellow since 2019 and ACM distinguished Member since 2019. Dr. Han is 1\% highly cited researcher since 2017 according to Web of Science. Dr. Han is also the winner of 2021 IEEE Kiyo Tomiyasu Award, for outstanding early to mid-career contributions to technologies holding the promise of innovative applications, with the following citation: ``for contributions to game theory and distributed management of autonomous communication networks."
	
\end{IEEEbiography}

\vskip -2\baselineskip plus -1fil
\begin{IEEEbiography}[{\includegraphics[width=1in,height=1.25in,clip,keepaspectratio]{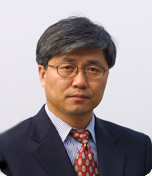}}]{Choong~Seon~Hong}
	(S'95-M'97-SM'11)
	received the B.S. and M.S. degrees in electronic engineering from Kyung Hee University, Seoul, South Korea, in 1983 and 1985, respectively, and the Ph.D. degree from Keio University, Japan, in 1997. In 1988, he joined KT, where he was involved in broadband networks as a Member of Technical Staff. Since 1993, he has been with Keio University. He was with the Telecommunications Network Laboratory, KT, as a Senior Member of Technical Staff and as the Director of the Networking Research Team until 1999. Since 1999, he has been a Professor with the Department of Computer Science and Engineering, Kyung Hee University. His research interests include future Internet, ad hoc networks, network management, and network security. He is a member of the ACM, the IEICE, the IPSJ, the KIISE, the KICS, the KIPS, and the OSIA. Dr. Hong has served as the General Chair, the TPC Chair/Member, or an Organizing Committee Member of international conferences such as NOMS, IM, APNOMS, E2EMON, CCNC, ADSN, ICPP, DIM, WISA, BcN, TINA, SAINT, and ICOIN. He was an Associate Editor of the IEEE TRANSACTIONS ON NETWORK AND SERVICE MANAGEMENT, and the IEEE JOURNAL OF COMMUNICATIONS AND NETWORKS. He currently serves as an Associate Editor of the International Journal of Network Management, and an Associate Technical Editor of the IEEE Communications Magazine.
\end{IEEEbiography}

\end{document}